\newcounter{Accumulate} \setcounter{Accumulate}{0}
  \newwrite\accuwrite \immediate\openout\accuwrite=\jobname.acc
\newenvironment{accumulate}{\Collect@Body\accuPrint}{}
\newcommand{\accuPrint}[1]{
 \ifthenelse{\value{Accumulate} = 0}{%
      #1
  }
  {
    \newtoks\prxxxm
    \prxxxm{#1}
    \immediate\write\accuwrite{\the\prxxxm}
  }
}
\newcommand{\ifaccumulating}[1]{%
  \ifthenelse{\value{Accumulate} = 1}{%
    #1%
  }{}%
}
\newcommand{\ifnoaccumulating}[1]{%
  \ifthenelse{\value{Accumulate} = 0}{%
    #1%
  }{}%
}
\newcommand{\accuprint}{%
  \ifthenelse{\value{Accumulate} = 1}{
    \immediate\closeout\accuwrite
    \input{\jobname.acc} %
  }{}
}
\def\ca#1{{\cal#1}}
\def\crgj#1{\mathop{\rm cr}_{#1}}
\tikzstyle{every picture} = [>=latex]
\title{On Hardness of the Joint Crossing Number}
\author{Petr Hlin\v{e}n\'{y}\inst{1} \and Gelasio Salazar\inst{2}}
\institute{%
Faculty of Informatics, Masaryk University Brno, Czech Republic\\
\email{hlineny@fi.muni.cz} \and
Instituto de Fisica, Universidad Autonoma de San Luis Potosi, Mexico
\email{gsalazar@ifisica.uaslp.mx}
}
\begin{document}
\maketitle

\begin{abstract}
The Joint Crossing Number problem 
asks for a simultaneous embedding of two disjoint graphs into one surface 
such that the number of edge crossings (between the two graphs) is minimized.
It was introduced by Negami in 2001 in
connection with diagonal flips in triangulations of surfaces,
and subsequently investigated in a general form for small-genus surfaces.
We prove that all of the commonly considered variants of this problem are
NP-hard already in the orientable surface of genus~$6$, 
by a reduction from a special variant of the anchored crossing number
problem of Cabello and Mohar.
\end{abstract}

\section{Introduction}

Motivated by his investigation on diagonal flips in triangulations of
surfaces~\cite{negamiflips}, Negami introduced in~\cite{DBLP:journals/jgt/Negami01} the concept of {\em
  joint crossing numbers}. The general setup consists of two graphs
embeddable on the same surface, and the problem is to find a
simultaneous embedding into this surface, so that the number of edge
crossings is minimized. (Since both graphs are embedded, 
every crossing must involve an edge from each of the graphs.) 

In Negami's original definition, the embedded graphs were allowed to
share vertices and edges (this is the {\em diagonal crossing number}). 
In the subsequent papers on
joint crossing numbers, the attention has been restricted to the case
in which the corresponding graphs are disjoint. This mainstream
case is the one we focus on in this work, and we restrict the attention
to orientable surfaces. 

Within this case (the graphs $G_1, G_2$ to be jointly embedded in the
same surface $\Sigma$ are disjoint), three variants proposed by Negami have been studied. In the
first one, the aim is to minimize the number of crossings in 
{\em any} embedding of the disjoint union $G_1+G_2$ of $G_1$ and $G_2$; this is simply the {\em joint
  crossing number}. In the second variant, the {\em joint homeomorphic
  crossing number}, embeddings of $G_1$ and $G_2$ are already given,
and one must embed $G_1+G_2$ so that the restriction of this embedding
to each $G_i$ is homeomorphic to the prescribed embedding of $G_i$. In the third,
and most restricted variant, the {\em joint orientation-preserving
  homeomorphic crossing number}, in addition, the restrictions of the
embedding of $G_1+G_2$ to each $G_i$ must be orientation-preserving
homeomorphic to the prescribed embedding of $G_i$. 
(See the next section for more rigorous definitions.)

Relatively little is known on either of these
variants. In~\cite{DBLP:journals/jgt/Negami01}, Negami bounded the
homeomorphic crossing number in terms of the Betti numbers of the
graphs and the genus of
$\Sigma$. In~\cite{DBLP:journals/jgt/ArchdeaconB01}, Archdeacon and
Bonnington calculated the exact homeomorphic crossing number of two
graphs embedded in the projective plane, and also gave lower and upper
bounds, within a constant factor of each other, for the case in which
the host surface is the torus (Negami also obtained some nontrivial bounds
for toroidal joint embeddings
in~\cite{DBLP:journals/jgt/Negami01}). Richter and Salazar
investigated in~\cite{DBLP:journals/jgt/RichterS05} the case in which
both graphs are densely embedded. 

The associated algorithmic problems are the following:

\medskip
\noindent{\sc Joint Crossing Number} 
\smallskip

\noindent{\bf Input:} Graphs $G_1, G_2$ embeddable in a given surface
$\Sigma$, and an integer~$k$.\\
\noindent{\bf Question:} Is the joint crossing number of $G_1$ and
$G_2$ in $\Sigma$ at most $k$?
\medskip

\noindent{\sc Joint Homeomorphic Crossing Number} 
\smallskip

\noindent{\bf Input:} Embeddings of each of two disjoint graphs $G_1$,
$G_2$ in a surface $\Sigma$, and~$k$.\\
\noindent{\bf Question:} Is the joint homeomorphic crossing number of $G_1$ and
$G_2$ at most~$k$?%
\medskip

\noindent{\sc Joint OP-Homeomorphic Crossing Number} 
\smallskip

\noindent{\bf Input:} Embeddings of each of two disjoint graphs $G_1$,
$G_2$ in a surface $\Sigma$, and~$k$.\\
\noindent{\bf Question:} Is the joint orientation-preserving homeomorphic 
crossing number of $G_1$ and $G_2$ in $\Sigma$ at most~$k$?
\medskip

It follows from~\cite[Theorem 2.2]{DBLP:journals/jgt/ArchdeaconB01}
that the last two problem variants are easy in the projective plane
(it suffices to calculate the dual widths of the embeddings).
The aforementioned results also suggest (although this is an open problem) that
optimal solutions in the case of $\Sigma$ being the torus,
can always be obtained in a particularly nice way:
embed all the vertices of one of the graphs, say $G_1$, in the same face of
the other graph $G_2$ (and then route the excessive edges of $G_1$ across $G_2$).
This nice property ceases to be true for the homeomorphic variant already in
the double torus~\cite{DBLP:journals/jgt/RichterS05}, and our results imply
that the property fails really badly for all higher genus surfaces and all
problem variants.

In his comprehensive survey~\cite{survscha} of the many different variants of
crossing number definitions, Marcus Schaefer marks the complexity of all the
aforementioned variants of the joint crossing number as open. 
These problems are all easily seen to be in NP, so the open problem is
their hardness. Our main result in this paper settles this question.

\begin{theorem}\label{thm:maintheorem}
{\sc Joint Crossing Number}, {\sc Joint Homeomorphic Crossing Number}, and {\sc
 Joint OP-Homeomorphic Crossing Number} are NP-hard problems
in any orientable surface of genus $6$ or higher.
This remains true even if the inputs are restricted to simple $3$-connected graphs.
\end{theorem}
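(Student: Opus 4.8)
\medskip
\noindent\emph{Proof plan.}
We reduce from the \textsc{Anchored Crossing Number} problem of Cabello and Mohar. An instance there is a closed disk $\Delta$ together with two disjoint planar graphs $A_1,A_2$, each carrying a prescribed set of \emph{anchor} vertices placed at distinct points of $\partial\Delta$; one seeks plane drawings of $A_1$ and of $A_2$ inside $\Delta$ (anchors at their points, everything else in $\operatorname{int}(\Delta)$) that minimise the number of $A_1$--$A_2$ crossings. We shall use the \emph{special variant} in which the anchors of $A_1$ occupy a fixed number of arcs of $\partial\Delta$ interleaving with those of $A_2$ in a fixed cyclic pattern, and in which $A_1,A_2$ may be taken connected --- essentially what the Cabello--Mohar reduction already outputs; this variant remains NP-hard.

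Given $(\Delta;A_1,A_2;k)$ we construct in polynomial time two simple $3$-connected graphs $G_1\supseteq A_1$ and $G_2\supseteq A_2$, both embeddable in the orientable genus-$6$ surface $\Sigma$, each with a prescribed embedding, so that
\[
  \mathrm{jcr}(G_1,G_2;\Sigma)\;=\;\mathrm{acr}(A_1,A_2;\Delta)+c_0
\]
for a fixed constant $c_0$, and the same identity holds for the homeomorphic and the orientation-preserving homeomorphic joint crossing numbers. The main device is a fixed \emph{frame}: a pair of vertex-disjoint $3$-connected graphs $\Phi_1,\Phi_2$, each embeddable in $\Sigma$ and ``rigid'' (dense, of large edge-width), chosen so that every joint drawing of $\Phi_1+\Phi_2$ in $\Sigma$ attaining the minimum number $c_0$ of frame--frame crossings is homeomorphic to a single \emph{canonical} drawing, whose only large face is a disk $\Delta_0\cong\Delta$ that meets $\partial\Phi_1$ and $\partial\Phi_2$ in exactly the interleaving arc pattern of the anchors. (Rigidly realising this fixed boundary pattern inside a \emph{closed} surface is what pins the genus down to $6$; roughly, one handle is spent per transition of the boundary pattern.) We then form $G_i$ by gluing $A_i$ onto $\Phi_i$, identifying each anchor of $A_i$ with the matching attachment vertex of $\Phi_i$ on $\partial\Delta_0$, and declare the prescribed embedding of $G_i$ to be the canonical one; routine gadget surgery (thickening, adding local $3$-connected structure) then makes $G_i$ simple and $3$-connected without changing any crossing count.

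The inequality $(\le)$ is immediate: an anchor-respecting drawing of $A_1+A_2$ in $\Delta$ with $m$ crossings, placed inside the face $\Delta_0$ of the canonical frame drawing, gives a joint drawing of $G_1+G_2$ in $\Sigma$ with exactly $c_0+m$ crossings, and since the frames appear in their prescribed embeddings this works for all three variants simultaneously. The opposite inequality $(\ge)$ is the heart of the argument. From a joint drawing $D$ of $G_1+G_2$ with at most $c_0+k$ crossings, one first uses the rigidity of the frames --- a dense, large-edge-width graph cannot be redrawn, and only boundedly many crossings can meet its core, which can then be rerouted away without increasing the count --- to force $\Phi_1$ and $\Phi_2$ into the canonical interleaving up to homeomorphism, so that $\Delta_0$ is again a disk carrying the anchors at their prescribed cyclic positions. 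Since each $A_i$ is connected and attached to $\Phi_i$ only at vertices all lying on $\partial\Delta_0$, with $\Delta_0$ the sole face of $\Phi_i$ incident to all of them, $A_i$ must be drawn entirely inside $\Delta_0$; hence $D$ restricted to $A_1+A_2$ is an anchor-respecting drawing in $\Delta$ with at most $k$ crossings, as required. Finally, rigidity makes the prescribed embeddings the only embeddings of $G_1,G_2$ (we break the reflective symmetry so that this persists even up to orientation-preserving homeomorphism), so the three joint-crossing variants coincide on the constructed instances and one reduction settles all of them; the surplus handles for genus $>6$ are absorbed into the frame.

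\medskip
\noindent The real obstacle is the construction and analysis of the frame: proving that ``few frame--frame crossings force the canonical interleaving up to homeomorphism'' --- that the surplus topology of $\Sigma$ cannot be exploited to let $A_1$ and $A_2$ meet outside $\Delta_0$, nor to permute the anchors out of their prescribed cyclic order. This needs careful combinatorial-topological rigidity lemmas for embeddings of dense graphs on surfaces together with a crossing-cleanup argument, and it is exactly here that both the ``special'' shape of the Cabello--Mohar instances and the particular value $6$ come in.
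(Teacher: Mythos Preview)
Your proposal is a plan, not a proof: the entire technical content is deferred to the closing paragraph, where you yourself identify ``the real obstacle'' as constructing the frames $\Phi_1,\Phi_2$ and proving that a small frame--frame crossing count forces the canonical interleaving. That is exactly right, and it is essentially the whole theorem; what you have written is a statement of what must be done, not an argument that it can be done. In particular, ``dense, large edge-width'' rigidity of $\Phi_i$ fixes the embedding of each $G_i$ \emph{separately}, but says nothing about how $\Phi_1$ sits relative to $\Phi_2$ in a joint drawing --- and that relative positioning is the entire difficulty.

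The paper carries out a version of your plan, but via a two-stage decomposition that your one-shot frame conflates. First (Theorem~\ref{thm:CMreduction} with Section~\ref{sec:mult}), the Cabello--Mohar anchored instance is reduced to a \emph{face-anchored joint planar} problem: two planar graphs jointly drawn in the sphere, with six designated vertices of one required to lie in six designated faces of the other. The hard step here is Lemma~\ref{lem:F1F2}, a graded-weight ladder argument that forces the unboundedly many Cabello--Mohar anchors into their prescribed linear order using only six face anchors at the corners. Second (Theorem~\ref{thm:face-anchored}), this planar $6$-face-anchored problem is lifted to {\sc Joint Crossing Number} in~$\ca S_6$: each anchor face gets a toroidal-grid gadget on a private handle, each anchor vertex gets a heavy $K_{3,3}$, and an ordering lemma (Lemma~\ref{lem:ordering}) on the weights matches them bijectively. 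Your heuristic ``one handle is spent per transition of the boundary pattern'' is not how the bound~$6$ arises: the handles encode six face anchors, and the interleaving of the unboundedly many attachment points along $\partial\Delta_0$ is enforced by a purely planar weight calculation (Lemma~\ref{lem:F1F2}), not by topology. Without an explicit analogue of that lemma, nothing in your sketch prevents the attachment points of $\Phi_1$ and $\Phi_2$ from sliding past one another along $\partial\Delta_0$ in an optimal joint drawing.
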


The proof of this theorem is via a chain of reductions from a special variant of the
anchored crossing number problem of
Cabello and Mohar~\cite{DBLP:journals/siamcomp/CabelloM13}.

The rest of this paper is organized as follows. In
Section~\ref{sec:basic} we give rigorous definitions of the variants
of the joint embedding problem we analyze, and review some basic
concepts. To work out the reduction from Cabello-Mohar's anchored
crossing number, we devise joint embeddings in which certain vertices
of one of the graphs are required to lie
in prescribed faces of the other embedding. These {\em
  face-anchored joint embeddings} are developed in
Section~\ref{sec:face-anchored}. An additional fine-tuning of the
construction (in order to bring the reduction to anchored crossing
number from graphs of bounded genus) is given in
Section~\ref{sec:mult}. The reduction to the anchored crossing number is
laid out in Section~\ref{sec:reduction}. 
Finally, in
Section~\ref{sec:conclusions} we present some concluding remarks,
among which we give back a slight strengthening of the main result of
aforementioned~\cite{DBLP:journals/siamcomp/CabelloM13}.

\section{Basic Concepts}\label{sec:basic}

We follow the standard notation of letting $G_1+G_2$ denote the graph obtained as the disjoint union of two graphs
$G_1,G_2$.
A {\em toroidal grid} of size $p\times q$ is the Cartesian product of
a $p$-cycle with a $q$-cycle; this is a $4$-regular graph
consisting of an edge disjoint union of $q$
copies of a $p$-cycle and $p$ copies of a $q$-cycle. For each integer
$h\ge 0$, we let $\ca S_h$ denote the orientable surface of genus $h$.

We recall that in a {\em drawing} of a graph $G$ in a surface $\Sigma$,
vertices are mapped to points and edges are mapped to simple curves
(arcs) such that the endpoints of an arc are the vertices of the
corresponding edge; no arc contains a point that
represents a non-incident vertex. For simplicity, we often make no
distinction between the topological objects of a drawing (points and
arcs) and their corresponding graph theoretical objects (vertices and edges). 
A {\em crossing} in a drawing is an intersection point of two edges 
in a point other than a common endvertex. 
An {\em embedding} of a graph in a surface is a drawing with no edge crossings.

\begin{accumulate}
\ifaccumulating{\subsection{More basic concepts}}
We shall make use of the following facts; they are all straightforward
exercises in topological graph theory.

\begin{proposition}
\label{pro:disj-nonplanar-a}
Assume that a disjoint union of $k$ non-planar graphs is embedded in the
surface $\ca S_h$. Then $h\geq k$.
\end{proposition}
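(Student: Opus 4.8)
The plan is to derive the proposition from the additivity of the orientable genus over connected components. First I would reduce to the case that each $G_i$ is connected: a non-planar graph contains a connected non-planar subgraph (a subdivision of $K_5$ or of $K_{3,3}$, by Kuratowski's theorem), and replacing each $G_i$ by such a subgraph only weakens the hypothesis while leaving the conclusion unchanged. Writing $g(\cdot)$ for the least genus of an orientable surface admitting an embedding, the classical additivity of the orientable genus over connected components (Battle, Harary, Kodama and Youngs) gives $g(G_1+\dots+G_k)=\sum_{i=1}^{k}g(G_i)$; since each $G_i$ is non-planar, $g(G_i)\ge 1$, so $g(G_1+\dots+G_k)\ge k$, and any $\ca S_h$ into which $G_1+\dots+G_k$ embeds therefore has $h\ge k$.

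Since this is advertised as an exercise, I would also give a self-contained argument, which in effect re-proves the special case of additivity that is needed. Fix an embedding of $G=G_1+\dots+G_k$ in $\ca S_h$ with each $G_i$ connected and non-planar, and take pairwise disjoint closed regular neighbourhoods $N_1,\dots,N_k\subseteq\ca S_h$, with $N_i$ a regular neighbourhood of $G_i$. Each $N_i$ is a connected compact surface onto which $G_i$ deformation-retracts, and capping every boundary circle of $N_i$ by a disk yields a closed orientable surface $\hat N_i$ in which $G_i$ is cellularly embedded (the complement of $G_i$ in $\hat N_i$ consists of open disks). As $G_i$ is non-planar, $\hat N_i\neq S^2$, so $N_i$ has genus $g_i\ge 1$; write $b_i\ge 1$ for its number of boundary circles. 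Let $R$ be the closure of $\ca S_h\setminus\bigcup_i N_i$, with components $R^{(1)},\dots,R^{(m)}$ of genus $\gamma_j\ge 0$ and $\beta_j\ge 1$ boundary circles each.

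The remaining work is Euler-characteristic bookkeeping. Additivity of Euler characteristic along the gluing circles (each contributing $0$) gives $\chi(\ca S_h)=\sum_i\chi(N_i)+\sum_j\chi(R^{(j)})$, and with $\sum_i b_i=\sum_j\beta_j=:b$ this rearranges into $h=1-k-m+b+\sum_i g_i+\sum_j\gamma_j$. Now $\sum_i g_i\ge k$, and the incidence multigraph whose vertices are the pieces $N_1,\dots,N_k,R^{(1)},\dots,R^{(m)}$ and whose edges are the $b$ gluing circles is connected, since otherwise $\ca S_h$ would be disconnected; hence $b\ge (k+m)-1$. Substituting yields $h\ge k$. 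The only genuinely non-automatic point, and the reason the proposition is not entirely trivial, is that one cannot restrict the given embedding to a single $G_i$ directly, because the other components obstruct its faces; this forces the detour through the disjoint neighbourhoods $N_i$ and requires handling a possibly disconnected complement $R$, and it is precisely the connectivity of the incidence graph (equivalently, having at least $k+m-1$ gluing circles) that compensates for this.
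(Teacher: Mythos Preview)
Your arguments are correct, but there is nothing in the paper to compare them to: the paper lists this proposition (alongside Propositions~\ref{pro:disj-nonplanar-b} and~\ref{pro:k33handle}) as one of several ``straightforward exercises in topological graph theory'' and gives no proof.

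Of your two approaches, the first cleanly dispatches the statement by quoting the Battle--Harary--Kodama--Youngs additivity theorem after the reduction to connected $G_i$; this is the quickest route if one is willing to cite that result. Your second, self-contained argument is presumably closer to what the authors had in mind as the ``exercise'': the Euler-characteristic identity $h = 1 - k - m + b + \sum_i g_i + \sum_j \gamma_j$ is correct, and the two inequalities $\sum_i g_i \ge k$ (from non-planarity of each $G_i$) and $b \ge k+m-1$ (from connectivity of the incidence multigraph, which follows from connectivity of $\ca S_h$) combine to give $h \ge k$ exactly as you say.
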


\begin{proposition}
\label{pro:disj-nonplanar-b}
Let $G$ be a graph embedded in the surface $\ca S_h$.
Suppose that there exist $h$ pairwise disjoint non-planar subgraphs
$F_i\subseteq G$, $i=1,\dots,h$.
Then the induced embedding of $G-\bigcup_{i=1}^h V(F_i)$ is plane.
\end{proposition}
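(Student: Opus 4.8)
The plan is to argue through regular neighbourhoods of the $F_i$ together with additivity of genus. Write $G':=G-\bigcup_{i=1}^h V(F_i)$. Since $F_1,\dots,F_h$ are pairwise disjoint compact subsets of $\ca S_h$, and $G'$ meets no $F_i$ at all (it shares no vertex with $F_i$, and as a subgraph of the embedded $G$ it shares no point with $F_i$), I would choose pairwise disjoint regular neighbourhoods $N_i\subseteq\ca S_h$ of the $F_i$ that also avoid the embedded $G'$. Each $N_i$ is a compact orientable surface with boundary that deformation retracts onto $F_i$. Let $R$ be the closure of $\ca S_h\setminus\bigcup_i N_i$; then $R$ is again a compact orientable bordered surface, and the induced embedding of $G'$ lies entirely inside $R$.

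Next I record a genus bound on the pieces. For a compact orientable surface $S$ (possibly disconnected, possibly with boundary) let $\gamma(S)$ be the sum, over the connected components of $S$, of the genus of the closed surface obtained by capping every boundary circle with a disk. I claim $\gamma(N_i)\ge 1$ for each $i$: capping off the boundary circles of $N_i$ yields a closed orientable surface into which $F_i$ embeds, and since $F_i$ is non-planar this surface is not a sphere, so it has genus at least $1$. (If convenient one may first replace each $F_i$ by a connected non-planar component of it, which only shrinks $\bigcup V(F_i)$ and hence only strengthens the statement to be proved.)

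Now I would invoke the standard additivity fact from topological graph theory: if the closed orientable surface $\ca S_h$ is written as $A\cup B$ with $A,B$ compact and $A\cap B=\partial A=\partial B$ a disjoint union of circles, then $h\ge\gamma(A)+\gamma(B)$. This follows from the Euler-characteristic identity $\chi(\ca S_h)=\chi(A)+\chi(B)$ (gluing along $1$-complexes of zero Euler characteristic), the formula $\chi(S)=2(\#\text{components})-2\gamma(S)-(\#\text{boundary circles})$, and the observation that the multigraph whose vertices are the pieces and whose edges are the cutting circles is connected, so it has at least as many edges as vertices minus one. Applying this with $A=\bigcup_i N_i$ and $B=R$ gives $h\ge\sum_{i=1}^h\gamma(N_i)+\gamma(R)\ge h+\gamma(R)$, hence $\gamma(R)=0$. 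Therefore every connected component of $R$ is a genus-$0$ surface with boundary, i.e.\ a sphere with holes, which embeds in the plane; since the induced embedding of $G'$ lies in $R$, it is a plane embedding.

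The one point I would check most carefully is the bookkeeping when the $F_i$ (hence the $N_i$ and $R$) are disconnected: one must apply the Euler-characteristic count to all components at once and make sure the correction term coming from the gluing pattern — the cycle rank of the piece/circle multigraph — is correctly seen to be nonnegative. Everything else is routine. It is worth noting that the argument really uses this Euler-characteristic computation rather than merely Proposition~\ref{pro:disj-nonplanar-a}: a positive-genus component of $R$ need not contain any non-planar subgraph of $G$, so one cannot simply exhibit an $(h{+}1)$-st pairwise disjoint non-planar subgraph and contradict Proposition~\ref{pro:disj-nonplanar-a}.
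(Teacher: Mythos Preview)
Your argument is correct. The paper, however, does not actually supply a proof of this proposition: it simply lists it (together with Propositions~\ref{pro:disj-nonplanar-a} and~\ref{pro:k33handle}) as one of several ``straightforward exercises in topological graph theory'' and uses it as a black box. So there is no paper proof to compare against; you have filled in what the authors left implicit.

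Your route via regular neighbourhoods and additivity of genus is the natural way to make this exercise precise. The Euler-characteristic bookkeeping is right: from $\chi(\ca S_h)=\chi(A)+\chi(B)$ and $\chi(S)=2c(S)-2\gamma(S)-b(S)$ one gets $\gamma(A)+\gamma(B)=h-1+(c_A+c_B-b)$, and connectivity of the piece/circle bipartite multigraph gives $b\ge c_A+c_B-1$, hence $\gamma(A)+\gamma(B)\le h$. With $\gamma(N_i)\ge1$ for each $i$ this forces $\gamma(R)=0$, and the conclusion follows. Your caveat about possibly replacing each $F_i$ by a connected non-planar component is harmless and indeed simplifies the neighbourhood picture.

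Your closing remark is also well taken: one cannot just derive this from Proposition~\ref{pro:disj-nonplanar-a} by exhibiting an extra non-planar subgraph inside $R$, since nothing guarantees that $G'$ itself is non-planar even if $R$ had positive genus; the Euler-characteristic argument is genuinely needed.
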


\begin{proposition}
\label{pro:k33handle}
Suppose that a graph $G$, isomorphic to $K_{3,3}$, is embedded in a
surface $\Sigma$. Let $v\in V(G)$. 
Then at most one of the three cycles of $G-v$ is contractible in~$\Sigma$.
\end{proposition}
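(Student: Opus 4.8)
The plan is to argue by contradiction: assuming that two of the three cycles of $G-v$ are contractible, I will derive that $G\cong K_{3,3}$ is planar. Write the bipartition classes of $G$ as $\{a_1,a_2,a_3\}$ and $\{b_1,b_2,b_3\}$; by vertex-transitivity of $K_{3,3}$ we may assume $v=a_3$. Then $G-v$ is a copy of $K_{2,3}$, which we view as three internally disjoint arcs $P_i:=a_1b_ia_2$, $i=1,2,3$, joining the two degree-$3$ vertices; its only cycles are the three $4$-cycles $C_{ij}:=P_i\cup P_j$, and any two of them share exactly one arc $P_\ell$.

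The first step is to show that a contractible cycle $C_{ij}$ is actually a \emph{face} of the embedding. Say $C_{12}$ is contractible; being a contractible simple closed curve it bounds a closed disk $D\subseteq\Sigma$. The cycle $C_{12}$ has no chord in $G$ (every edge joining two of $a_1,a_2,b_1,b_2$ already lies on $C_{12}$), and the unique $C_{12}$-bridge of $G$ — consisting of the vertices $a_3,b_3$ and the five edges $a_3b_1,a_3b_2,a_3b_3,b_3a_1,b_3a_2$ — is connected, hence it is drawn entirely inside $D$ or entirely outside $D$. If it lay inside $D$, then all of $G$ would lie in the disk $D$, contradicting the non-planarity of $K_{3,3}$; so it lies outside $D$, and therefore the interior of $D$ is a face of the embedding whose facial walk is precisely the cycle $C_{12}$.

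Now suppose both $C_{12}$ and $C_{13}$ are contractible; by the previous step they bound distinct faces $\Delta_{12}\neq\Delta_{13}$ with facial walks $C_{12}$ and $C_{13}$. These cycles share the arc $P_1=a_1b_1a_2$, so each of the edges $a_1b_1$ and $a_2b_1$ lies on the facial walk of $\Delta_{12}$ and on that of $\Delta_{13}$; since an edge is incident to at most two faces, $\Delta_{12}$ and $\Delta_{13}$ are exactly the two faces incident to $a_1b_1$, and also exactly those incident to $a_2b_1$. A short inspection of the (two possible) rotations at the degree-$3$ vertex $b_1$, whose incident edges are $a_1b_1$, $a_2b_1$, $a_3b_1$, then forces one of $\Delta_{12},\Delta_{13}$ to lie at a corner of $b_1$ incident to the edge $a_3b_1$; but $a_3b_1$ lies on neither $C_{12}$ nor $C_{13}$, so it is incident to neither of these faces — a contradiction. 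Since every pair among $C_{12},C_{13},C_{23}$ shares an arc, the same argument rules out any two of the three cycles being contractible, which proves the claim. I expect the ``contractible cycle is a face'' step to be the crux: the real content is that $C_{ij}$ does not merely bound \emph{some} disk but bounds a face, and that is exactly where non-planarity of $K_{3,3}$ is used; once that is established, the local rotation count at the shared vertex is routine.
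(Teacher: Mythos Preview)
Your argument is correct. The paper itself does not prove this proposition; it is listed among several facts described as ``straightforward exercises in topological graph theory'' and left without proof, so there is no authors' argument to compare against.

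For what it is worth, your two-step strategy is a natural and efficient one. The first step (a contractible $C_{ij}$ must in fact bound a \emph{face}, because the unique $C_{ij}$-bridge is connected and cannot sit inside the disk without making $K_{3,3}$ planar) is exactly the place where non-planarity is consumed, as you note. The second step is cleanest when phrased as you essentially do: both $C_{12}$ and $C_{13}$ traverse $b_1$ via the \emph{same} pair of edges $a_1b_1,a_2b_1$, so each of the disk faces $\Delta_{12},\Delta_{13}$ must occupy the unique corner at $b_1$ not incident to $a_3b_1$; two distinct faces cannot share a corner, giving the contradiction. The preliminary observation that $\Delta_{12}\neq\Delta_{13}$ (because a disk face has a unique bounding cycle) is implicit in your write-up and could be stated explicitly, but the logic is sound.
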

\end{accumulate}\medskip

Let $G_1,G_2$ be disjoint graphs, both of which embed in the same
orientable surface $\Sigma$. A drawing $G^0$ of the graph $G_1+G_2$ in $\Sigma$ is called a
{\em joint embedding of $(G_1,G_2)$} 
if the restriction of $G^0$ to $G_i$, for each $i=1,2$, is an embedding.
Furthermore, if prescribed embeddings $G^0_1,G^0_2$ of $G_1,G_2$ are given
and the restriction of $G^0$ to $G_i$ is homeomorphic (respectively,
orientation-preserving homeomorphic) to $G^0_i$, $i=1,2$,
then $G^0$ is a {\em joint homeomorphic
embedding of $(G^0_1,G^0_2)$} (respectively, {\em joint
  orientation-preserving homeomorphic
embedding of $(G^0_1,G^0_2)$}) in $\Sigma$.
Loosely speaking, in the joint homeomorphic variant(s), one is only allowed
to ``deform'' the prescribed embeddings of $G_1,G_2$ across the host surface.

Note that in any joint embedding of $(G_1,G_2)$, crossings may arise only 
between an edge of $G_1$ and an edge of $G_2$.
The {\em joint crossing number} of $(G_1,G_2)$ in $\Sigma$
is the minimum number of crossings over all joint embeddings of
$(G_1,G_2)$ in $\Sigma$.
The {\em joint homeomorphic crossing number} and {\em joint
orientation-preserving homeomorphic crossing number} are defined
analogously.

In order to resolve the ordinary and homeomorphic variants of joint crossing
number problems at once, we introduce the following ``generalizing''
technical definition.
An instance $(G_1,G_2)$ of the joint crossing number problem in $\Sigma$
is called {\em orientation-preserving homeo-invariant} if the input graphs
$G_1,G_2$ are given together with embeddings $G_1',G_2'$ in $\Sigma$, 
and the following holds:
there exists a joint embedding $G^0$ of $(G_1,G_2)$, achieving the joint crossing number, such that the subembedding of $G^0$
restricted to $G_i$ is orientation-preserving homeomorphic to $G_i'$,
for~$i=1,2$.

Note the important difference---while in the joint
orientation-preserving homeomorphic crossing number problem we require the
considered joint embeddings to respect the given homeomorphism classes of
$G^0_1,G^0_2$ (a {\em restriction}), 
for an orientation-preserving homeo-invariant instance we admit all
joint embeddings, but we know that some of the optimal solutions will respect the
homeo\-morphism classes of $G_1',G_2'$ (a {\em promise}).
We call {\sc OP-Homeo-Invariant Joint Crossing Number} problem
the ordinary {\sc Joint Crossing Number} problem with inputs restricted
only to orientation-preserving homeo-invariant instances.


\medskip

The following is a useful artifice 
in crossing numbers research. In a {\em weighted} graph, each edge is
assigned a positive number (the {\em weight, or thickness} of the edge). Now the 
{\em weighted joint crossing number} is defined as the ordinary
joint crossing number, but a crossing between edges $e_1$ and $e_2$,
say of weights $t_1$ and $t_2$, contributes $t_1 t_2$ to the weighted
joint crossing number. The weighted variants of the {joint
  homeomorphic crossing number} and of the 
{joint orientation-preserving homeomorphic crossing} numbers are defined
analogously.
\ifnoaccumulating{\par
In order to be able to smoothly use the weighted joint crossing number
variants in this paper, we give the following reduction,
which is easily proved using folklore tricks for
transforming weighted graphs into ordinary graphs.
}\ifaccumulating{%
The following reduction is easily proved using folklore tricks for
transforming weighted graphs into ordinary ones; 
the fact that we are not in the plane does not play a role here.
}

\ifaccumulating{\begin{proposition}[folklore]}
\ifnoaccumulating{\begin{proposition}}
\label{pro:weighted}
There is a polynomial-time reduction from the weighted joint crossing number
problem, with edge weights encoded in unary,
to the unweighted joint crossing number problem.
Moreover, this reduction can preserve $3$-connectivity and simplicity of the graphs.
\end{proposition}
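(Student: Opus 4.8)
The plan is to run the classical ``edge bundling'' trick, but with enough extra structure that the output graphs stay simple and $3$-connected. Given a weighted instance $(G_1,G_2)$ (together with embeddings $G_1',G_2'$ in the homeomorphic variants), I would form $\hat G_1,\hat G_2$ as follows: replace every edge $e=uv$ of weight $t_e$ by $t_e$ internally disjoint $u$--$v$ paths of length three (a single such path if $t_e=1$), and then, around every vertex $w$, add a cycle joining---in the rotation at $w$---the subdivision vertices that sit next to $w$, one per bundle path through $w$; call these cycles the \emph{rings}. No loops or parallel edges are introduced, so $\hat G_1,\hat G_2$ are simple; and since the weights are given in unary, $\hat G_1+\hat G_2$ has size polynomial in the input. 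When prescribed embeddings are given, I transport them to $\hat G_i'$ by fattening each edge of $G_i'$ into a thin band carrying its bundle, placing the two endpoint subdivision vertices of each bundle path inside arbitrarily small disks around $u$ and $v$, and drawing the rings inside those disks.

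Next I would prove that the joint crossing number of $(\hat G_1,\hat G_2)$ in $\Sigma$ equals the weighted joint crossing number of $(G_1,G_2)$ in $\Sigma$ (and the analogous statement for the homeomorphic and orientation-preserving homeomorphic variants). For ``$\le$'' I take an optimal weighted joint embedding of $(G_1,G_2)$ and perform exactly the band construction above: a transversal crossing of the band of an edge $e_1$ of weight $t_1$ with the band of an edge $e_2$ of weight $t_2$ becomes precisely $t_1t_2$ crossings, whereas the rings lie in the tiny vertex disks and cross nothing, so the total equals the weighted count. For ``$\ge$'' I take an optimal joint embedding $D$ of $(\hat G_1,\hat G_2)$, delete all ring edges, and then pick for each original edge $e$, independently and uniformly, one of its $t_e$ parallel copies. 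The chosen copies of $\hat G_i$ form an embedded subdivision of $G_i$ (of the prescribed homeomorphism class, in the homeomorphic variants), hence together a joint embedding of $(G_1,G_2)$; a linearity-of-expectation computation (an ordered pair consisting of a copy of $e_1$ and a copy of $e_2$ is selected with probability $1/(t_{e_1}t_{e_2})$, which cancels the weight $t_{e_1}t_{e_2}$) gives that the expected weighted crossing number of this joint embedding equals the number of crossings of $D$ between non-ring edges, which is at most the total number of crossings of $D$. Hence the weighted joint crossing number is at most that of $(\hat G_1,\hat G_2)$, and combined with ``$\le$'' we obtain equality, yielding the reduction with unchanged threshold $k$. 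Since the argument uses no property of $\Sigma$, the same works in every orientable surface and for all three flavours.

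The delicate point---and the one I expect to cost the most bookkeeping---is $3$-connectivity. Plain bundling (subdivide a heavy edge into parallel paths and stop there) leaves the new degree-two vertices of a heavy edge $e=uv$ separated from everything else by the pair $\{u,v\}$, which destroys $3$-connectivity. The length-three paths together with the rings are chosen precisely to prevent this: each original edge $uv$ now feeds a subdivision vertex into the ring of $u$ and a \emph{different} subdivision vertex into the ring of $v$, so after removing any two vertices the ring of a surviving vertex still reaches, along edges of $\hat G_i$, some original vertex outside the removed pair (this is where $\deg_{G_i}(v)\ge 3$ enters). I would finish by verifying that no $2$-separator of $\hat G_i$ survives, going through the cases according to whether the two deleted vertices are both original, one original and one a subdivision vertex, or both subdivision vertices; this is routine but must be done carefully, and it is exactly what forces the bundle paths to have length three rather than two, so that every ring still contains a subdivision vertex witnessing an edge to a third original vertex.
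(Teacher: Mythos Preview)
Your bundling construction and the averaging argument for ``$\ge$'' are essentially what the paper does as well; the paper simply picks, for each bundle, the single copy with the fewest crossings, which is the derandomized form of your expectation computation. The constructions for simplicity/$3$-connectivity differ: the paper subdivides each parallel copy once and links the midpoints \emph{within one bundle} by a path, whereas you use length-three paths and add a ring at every vertex linking subdivision vertices \emph{across} bundles.

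There is, however, a real gap in your ``$\le$'' direction for the plain (non-homeomorphic) joint crossing number. The ring at $w$ together with the spokes $ws_j$ forms a wheel, and locally this forces the rotation at $w$ in any embedding of $\hat G_i$ to agree (up to reflection) with the cyclic order you hard-wired into the ring. But in the non-homeomorphic problem no embedding of $G_i$ is supplied, and an optimal weighted joint embedding of $(G_1,G_2)$ may use rotations at some vertices that differ from whatever rotation you chose when building the rings; then the band construction cannot place the ring of $w$ inside the tiny disk without $\hat G_i$ crossing itself, and there is no general reason the mismatch can be absorbed elsewhere on the surface (for instance when the embedding of $G_i$ is cellular and already uses the full genus of~$\Sigma$). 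Consequently your claimed equality of crossing numbers is not justified in this variant. The paper's within-bundle path avoids this pitfall because it can always be drawn in a thin neighbourhood of the edge it came from, independent of the rotation at the endpoints. Your argument is fine for the homeomorphic and OP-homeomorphic variants, where the prescribed embeddings supply the rotations and your forward construction goes through verbatim; to cover the proposition as stated you would need a rotation-agnostic gadget for the $3$-connectivity clause.
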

\ifnoaccumulating{%
\begin{proof}
Consider an instance of the weighted {\sc Joint Crossing Number}
problem, that is, a pair of connected graphs $(G_1,G_2)$
and their edge-weight mappings $w_1$ and $w_2$, respectively.

We construct a graph $G_i'$, $i=1,2$, as follows:
every edge $e\in E(G_i)$ is replaced with a bunch $B_e$ of $w_i(e)$ parallel
unweighted edges in $G_i'$.
For every joint embedding $G^0$ of $(G_1,G_2)$ of weigthed crossing
number~$x$, we get a corresponding joint embedding of $(G_1',G_2')$
of crossing number equal to~$x$ by routing each bunch $B_e$ closely along $e$
as in~$G^0$.
Conversely, assume a joint embedding $G^1$ of $(G_1',G_2')$ of
crossing number~$x$.
For every $e\in E(G_1)\cup E(G_2)$, we choose an edge $e'\in B_e\subseteq
E(G_1')\cup E(G_2')$ which minimizes the number of crossings in $G^1$ among
all edges in~$B_e$.
The chosen edges $e'$ then define a joint embedding $G^0$ of $(G_1,G_2)$,
and the weighted number of crossings in $G^0$ is clearly at most~$x$.

Finally, in order to preserve simplicity and/or $3$-connectivity in the
reduction, we construct $G_i''$, $i=1,2$, from $G_i'$ as follows:
for every $e\in E(G_i)$, we subdivide every edge in $B_e\subseteq E(G_i')$
with a new vertex and connect these new vertices in $B_e$ together by a path
(in any order).
Since $G_i''$ contains a subdivision of $G_i$ and the added paths can
be easily drawn without crossings, this move does not change 
an optimum solution to the joint crossing number problem.
\qed\end{proof}
}

\begin{accumulate}
\ifaccumulating{\subsubsection*{An ordering lemma}}
Finally, this simple lemma will also be very useful:

\begin{lemma}\label{lem:ordering}
Let $a_1<a_2<\dots<a_k$ and $b_1>b_2>\dots>b_k$ be two sequences of integers.
Let $\pi$ be any permutation of $\{1,\dots,k\}$ other than the identity.
Then $\sum_{i=1}^ka_ib_{\pi(i)}-\sum_{i=1}^ka_ib_{i}\,\geq\,
	\min_{i\not=j}|a_i-a_j|$.
\end{lemma}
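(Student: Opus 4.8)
The plan is to reduce to the case where $\pi$ is a single transposition by a standard exchange/bubble-sort argument. First I would observe that it suffices to prove the stronger-looking claim that swapping two "misordered" positions never increases the sum, and strictly decreases it by at least $\min_{i\neq j}|a_i-a_j|$ when the two swapped indices are an inversion. Concretely: suppose $\pi$ is not the identity, so there are indices $p<q$ with $\pi(p)>\pi(q)$ (an inversion). Let $\pi'$ be $\pi$ with the values at $p$ and $q$ exchanged. Then
\[
\sum_{i}a_ib_{\pi(i)}-\sum_{i}a_ib_{\pi'(i)}
 = a_p b_{\pi(p)}+a_q b_{\pi(q)} - a_p b_{\pi(q)} - a_q b_{\pi(p)}
 = (a_q-a_p)\bigl(b_{\pi(p)}-b_{\pi(q)}\bigr).
\]
Since $p<q$ gives $a_q-a_p>0$, and $\pi(p)>\pi(q)$ together with the decreasing $b$-sequence gives $b_{\pi(p)}-b_{\pi(q)}<0$, wait---$b$ is decreasing, so $\pi(p)>\pi(q)$ gives $b_{\pi(p)}<b_{\pi(q)}$, hence $b_{\pi(p)}-b_{\pi(q)}<0$; thus the whole difference is negative, i.e.\ $\sum_i a_i b_{\pi'(i)}<\sum_i a_i b_{\pi(i)}$. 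Moreover $|a_q-a_p|\geq\min_{i\neq j}|a_i-a_j|$ and $|b_{\pi(p)}-b_{\pi(q)}|\geq 1$ (these are distinct integers), so the drop is at least $\min_{i\neq j}|a_i-a_j|$.

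Next I would assemble the global bound from these single swaps. Choosing at each step an inversion and performing the corresponding exchange strictly reduces the number of inversions of the current permutation (this is exactly the bubble-sort invariant), so after finitely many steps we reach the identity. Since the very first such swap already decreases the sum by at least $\min_{i\neq j}|a_i-a_j|$, and all subsequent swaps only decrease it further (each difference $(a_q-a_p)(b_{\pi(p)}-b_{\pi(q)})$ is nonpositive along the way, in fact negative at every inversion step), we get
\[
\sum_{i=1}^k a_i b_{\pi(i)} - \sum_{i=1}^k a_i b_i \;\geq\; \min_{i\neq j}|a_i-a_j|,
\]
which is the claim. Alternatively, and perhaps more cleanly for write-up, I would induct on the number of inversions of $\pi$: the base case (one inversion, so $\pi$ is an adjacent transposition in value-space) follows from the displayed identity above, and the inductive step peels off one inversion-reducing swap, whose contribution is nonnegative, reducing to a permutation with fewer inversions that is still not the identity unless we have already accounted for the $\geq\min|a_i-a_j|$ term.

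I do not expect a genuine obstacle here; the only point requiring a little care is making sure the accounting is monotone---i.e.\ that after extracting the one swap that yields the $\min_{i\neq j}|a_i-a_j|$ gain, the remaining swaps do not give back any of it. This is handled by the observation that \emph{every} inversion-reducing swap has nonnegative contribution $(a_q-a_p)(b_{\pi(q)}-b_{\pi(p)})\geq 0$ (with $p<q$, $\pi(p)>\pi(q)$), so the total telescoped difference is a sum of nonnegative terms, at least one of which is $\geq\min_{i\neq j}|a_i-a_j|$. The integrality of the $a_i$ and $b_i$ is used only to guarantee $|b_{\pi(p)}-b_{\pi(q)}|\geq 1$ in that one decisive term.
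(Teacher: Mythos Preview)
Your proof is essentially identical to the paper's: both compute the change under a single inversion-reducing swap as $(a_d-a_c)(b_{\pi(d)}-b_{\pi(c)})\geq a_d-a_c$ and then induct on the number of inversions of~$\pi$. There is a sign slip in your displayed identity (the factorization should read $(a_p-a_q)(b_{\pi(p)}-b_{\pi(q)})$, equivalently $(a_q-a_p)(b_{\pi(q)}-b_{\pi(p)})$ as you correctly write in the final paragraph), but the argument and conclusion are right.
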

\begin{proof}
If $\pi$ is not the identity, then there are indices $c<d$ such that $\pi(c)>\pi(d)$.
Let $\pi'$ be defined as follows;
$\pi'(c):=\pi(d)$, $\pi'(d):=\pi(c)$, and $\pi'(i):=\pi(i)$ otherwise.
Then
$$
\sum_{i=1}^ka_ib_{\pi(i)} - \sum_{i=1}^ka_ib_{\pi'(i)}
 = a_cb_{\pi(c)}+a_db_{\pi(d)} - a_cb_{\pi'(c)}-a_db_{\pi'(d)} =
$$ $$
 = a_cb_{\pi(c)}+a_db_{\pi(d)} - a_cb_{\pi(d)}-a_db_{\pi(c)}
 = (a_d-a_c)(b_{\pi(d)}-b_{\pi(c)}) \geq a_d-a_c>0
.$$
The rest follows by induction (on the number of inversions in $\pi$).
\qed\end{proof}
\end{accumulate}

\section{Face-anchored Joint Embeddings}\label{sec:face-anchored}

For the purpose of intermediate reduction we introduce the following variant
of the concept of joint embedding of $(G_1,G_2)$.
Assume that $C_1,\dots,C_k$ are cycles of the graph $G_1$ such that there
exists an embedding of $G_1$ in $\Sigma$ in which each of $C_1,\dots,C_k$ is
a facial cycle.
Let $a_1,\dots,a_k\in V(G_2)$.
A joint embedding $G^0$ of $(G_1,G_2)$ in $\Sigma$ is called
{\em face-anchored with respect to $\{(C_i,a_i): i=1,\dots,k\}$},
if the restriction of $G^0$ to $G_1$ contains a face $\alpha_i$ bounded by
$C_i$ such that the vertex $a_i$ of $G_2$ is drawn inside $\alpha_i$,
for all $i=1,\dots,k$.
The pairs $(C_i,a_i)$ are the {\em face anchors} of this joint embedding
\ifnoaccumulating{(or of the corresponding joint crossing number) }%
problem, where each $\alpha_i$ bounded by $C_i$ is an {\em anchor face} 
and each $a_i$ is an {\em anchor vertex}.

We will consider face-anchored joint embeddings
and their crossing number only in the case of $\Sigma$ being the
sphere $\ca S_0$ and $k$ being a constant, 
and then we specifically speak about {\em face-anchored joint planar
embeddings}, and call the corresponding algorithmic problem
{\sc $k$-FA Joint Planar Crossing Number}.
If inputs of this problem are restricted only to instances which are
orientation-preserving homeo-invariant (cf.~Section~\ref{sec:basic}),
then we speak about the
{\sc OP-Homeo-Invariant $k$-FA Joint Planar Crossing Number} problem.


\begin{theorem}\label{thm:face-anchored}
For every integer $h\geq1$,
there is a polynomial-time reduction from the
{\sc OP-Homeo-Invariant $h$-FA Joint Planar Crossing Number} problem
to the {\sc OP-Homeo-Invariant Joint Crossing Number} problem
in the surface $\ca S_h$.
This reduction preserves connectivity of the involved graphs%
\ifaccumulating{, and the target problem can then 
be restricted to simple $3$-connected graphs}.
\end{theorem}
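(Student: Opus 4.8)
The plan is to take an instance of the $h$-FA Joint Planar Crossing Number problem — graphs $G_1, G_2$ with a planar embedding of $G_1$ in which $C_1,\dots,C_h$ are facial cycles, anchor vertices $a_1,\dots,a_h \in V(G_2)$, and a target $k$ — and attach to it a fixed gadget that ``spends'' all $h$ handles of $\ca S_h$ in a rigid way, forcing any cheap joint embedding in $\ca S_h$ to look essentially like a planar one with the $a_i$ trapped in the corresponding anchor faces. Concretely, for each $i$ I would pick a copy $H_i$ of $K_{3,3}$ (or a similarly small non-planar graph), subdivide one of its edges, and identify the subdivided edge — or rather, attach $H_i$ by a heavy-weight connection — so that $C_i$ becomes a separating cycle on the surface with $H_i$ sitting ``behind'' the handle it uses. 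The weights are chosen (via Proposition~\ref{pro:weighted} we may freely use weighted graphs, with unary weights, and still later get $3$-connected simple graphs) so large that no edge of any $H_i$ may be crossed in an optimal solution, and so that each edge of each $C_i$ is too heavy to be crossed as well; the only ``cheap'' edges are the original edges of $G_1$ and $G_2$.

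The key steps, in order: (1) Build $\hat G_1 = G_1 \cup H_1 \cup \dots \cup H_h$ with the $H_i$ attached inside the faces bounded by $C_i$ in the given planar embedding of $G_1$; then in $\ca S_h$ one can embed $\hat G_1$ by routing each $H_i$ across its own handle, so $\hat G_1$ has an embedding in $\ca S_h$. Set $\hat G_2 = G_2$. (2) Show the forward direction: a face-anchored joint planar embedding of $(G_1,G_2)$ with $x$ crossings yields a joint embedding of $(\hat G_1, \hat G_2)$ in $\ca S_h$ with $x$ crossings, by placing the $H_i$ part of $\hat G_1$ onto the $i$-th handle and keeping everything else planar — since $a_i$ lies in the anchor face $\alpha_i$, the part of $G_2$ near $a_i$ does not have to cross the heavy $C_i$ or $H_i$. (3) Show the reverse (the crux): given any joint embedding of $(\hat G_1,\hat G_2)$ in $\ca S_h$ with $\le k$ crossings, argue that since $k$ is smaller than any single heavy weight, no edge of $H_i$ and no edge of $C_i$ is crossed. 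Then the $h$ pairwise disjoint non-planar subgraphs $H_1,\dots,H_h$ occupy all the genus (Proposition~\ref{pro:disj-nonplanar-a}), so by Proposition~\ref{pro:disj-nonplanar-b} the rest of $\hat G_1$ — in particular $G_1$ — is embedded in a planar way; moreover each $C_i$ is a non-contractible-to-the-wrong-side cycle: using Proposition~\ref{pro:k33handle} one pins down that $C_i$ must separate the handle carrying $H_i$ from the plane region carrying $G_1$, so the disk on the $G_1$-side of $C_i$ is a genuine face, and the uncrossed status of $C_i$ forces each anchor vertex $a_i$ of $G_2$ to lie on the correct side, i.e. inside $\alpha_i$. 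Restricting to this planar region gives a face-anchored joint planar embedding of $(G_1,G_2)$ with $\le k$ crossings. (4) Check that the reduction preserves connectivity, that it runs in polynomial time (unary weights bounded by a polynomial in $k$ and the instance size), and — invoking Proposition~\ref{pro:weighted} and the subdivision trick — that the ordinary unweighted target instance can be taken simple and $3$-connected; and check that the constructed instance is orientation-preserving homeo-invariant, because the optimal joint embeddings we produce differ from the fixed reference embedding of $\hat G_1$ only by deformations that respect orientation (the $H_i$ gadgets are rigid), and similarly for $\hat G_2 = G_2$.

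The main obstacle is step (3), specifically making rigorous the topological claim that in \emph{every} low-crossing joint embedding in $\ca S_h$ the cycle $C_i$ ends up bounding an actual anchor face with $a_i$ on the right side — i.e. ruling out that $C_i$ bounds a disk that has slipped to the ``other'' side of the handle, or that several $C_i$'s share genus in an unexpected way, or that $a_i$ sits in some other face of $G_1$ that happens not to force a crossing. This is where Proposition~\ref{pro:k33handle} (at most one cycle of $K_{3,3} - v$ is contractible) does the real work: it forbids the $H_i$ gadget from being routed so that $C_i$ becomes contractible on the ``plane side'', and combined with the genus accounting from Propositions~\ref{pro:disj-nonplanar-a} and~\ref{pro:disj-nonplanar-b} it nails down the cyclic/handle structure around each anchor. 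Everything else — the weighting argument, the forward direction, the bookkeeping for $3$-connectivity and homeo-invariance — is routine given the groundwork already laid in Section~\ref{sec:basic}.
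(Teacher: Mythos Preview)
Your plan has a genuine gap in step~(3), precisely at the sentence ``the uncrossed status of $C_i$ forces each anchor vertex $a_i$ of $G_2$ to lie on the correct side, i.e.\ inside $\alpha_i$.'' This does not follow from anything you have set up. In your construction $\hat G_2=G_2$ is untouched: $a_i$ is just an ordinary vertex of a planar graph, and nothing in the joint embedding problem distinguishes it. Even granting that the $H_i$'s eat all the genus, that $G_1$ sits planarly, and that each $C_i$ separates its handle from the planar region, an optimal joint embedding is free to place all of $G_2$---including every $a_i$---entirely in the planar part, in whatever faces of $G_1$ happen to minimise crossings, ignoring the anchor faces completely. The uncrossed status of $C_i$ only says $G_2$-edges do not cross $C_i$; it says nothing about which side of $C_i$ the vertex $a_i$ lies on. Proposition~\ref{pro:k33handle} constrains the embedding of your $H_i\subseteq\hat G_1$, not the location of $a_i\in V(\hat G_2)$.

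The paper repairs exactly this by working on \emph{both} graphs: to $G_1$ it attaches toroidal-grid gadgets $T_i$ (one per anchor face, using up the genus as you intended), and to $G_2$ it attaches a heavy-weight copy $L_i$ of $K_{3,3}$ at each anchor vertex $a_i$. Now $L_i$ is nonplanar and its thick edges cannot afford to cross $G_1$, so $L_i$ is forced onto one of the handles---dragging $a_i$ into the corresponding anchor face. This still only gives a bijection between the $a_i$'s and the anchor faces, not the identity; the paper then breaks the symmetry by choosing the grid sizes $g_i$ and the $L_i$-weights $t_i$ as two oppositely monotone sequences and invoking an elementary rearrangement inequality (Lemma~\ref{lem:ordering}) to show that any non-identity matching costs strictly more. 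Your outline is missing both of these ideas---the nonplanar attachment on the $G_2$ side, and the asymmetric weighting that pins down the specific correspondence $a_i\leftrightarrow\alpha_i$.
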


\begin{proof}
By Proposition~\ref{pro:weighted}, we may consider 
the source crossing problem as {\em unweighted} and to reduce to the {\em weighted}
joint crossing number problem in~$\ca S_h$, as long as the weights are
polynomial in the input size.

Consider an unweighted input $(G_1,G_2)$ of the 
{\sc OP-Homeo-Invariant $h$-FA Joint Planar Crossing Number} problem,
given along with the $h$ face anchors $\{(C_i,a_i): i=1,\dots,h\}$,
and with planar embeddings $G_1',G_2'$ of $G_1,G_2$
witnessing the homeo-invariant property.
To prove the theorem it suffices to construct (in polynomial time) 
a pair $(H_1',H_2')$ of $\ca S_h$-embedded graphs such that,
denoting by $H_1,H_2$ the corresponding abstract graphs, the following
holds:
\begin{itemize}
\item if $s$ is the (unknown) face-anchored joint planar crossing number
of $(G_1,G_2)$, then the joint orientation-preserving homeomorphic
(weighted) crossing number of $(H_1',H_2')$ is at most $f(s)$
(for a suitable function~$f$); and
\item if the joint crossing number of $(H_1,H_2)$ is at most
$f(s)$ for some integer~$s$, then the face-anchored joint planar crossing
number of $(G_1,G_2)$ is at most~$s$.
\end{itemize}

\definecolor{lightgray}{gray}{0.83}
\definecolor{llightgray}{gray}{0.93}
\begin{figure}[t]
\begin{center}
\begin{tikzpicture}[scale=0.4]
\tikzstyle{every path}=[line width=1.2pt, color=gray]
\draw[fill=llightgray,draw=none] (-4,-3) -- (13,-3) -- (12.5,6.6) -- (-1.5,6.6);
\draw[fill=white,draw=none] (5,3.4) ellipse (5.5 and 4.1);
\draw[fill=lightgray,draw=none] (5,0.1) ellipse (3.5 and 0.65);
\draw (1.5,0) arc (230:-50:5.4 and 4.45);
\draw[dashed] (8.5,0.1) arc (0:180:3.5 and 0.6);
\draw (1.5,0.1) arc (180:360:3.5 and 0.7);
\draw[fill=llightgray] (5,3.8) ellipse (2.2 and 1.5);
\tikzstyle{every path}=[thin, color=blue]
\draw[dashed] (5,3.8) ellipse (2.6 and 1.8);
\draw (5,3.8) ellipse (3.0 and 2.1);
\draw (5,3.8) ellipse (3.6 and 2.5);
\draw (5,3.8) ellipse (4.1 and 2.9);
\draw (1.45,1.7) arc (220:-40:4.6 and 3.3);
\draw[dashed,rotate=-30] (7.8,6.3) arc (0:180:1.65 and 0.6);
\draw[rotate=-30] (7.8,6.3) arc (0:-180:1.65 and 0.6);
\draw[dashed,rotate=30] (4.15,1.3) arc (0:180:1.65 and 0.6);
\draw[rotate=30] (4.15,1.35) arc (0:-180:1.65 and 0.6);
\draw[dashed] (2.8,3.8) arc (0:180:1.6 and 0.4);
\draw (2.8,3.8) arc (0:-180:1.6 and 0.4);
\draw[dashed] (10.37,3.8) arc (0:180:1.6 and 0.4);
\draw (10.37,3.8) arc (0:-180:1.6 and 0.4);
\draw[dashed] (5,7.85) arc (90:270:0.2 and 1.3);
\draw (5,7.85) arc (90:-90:0.2 and 1.3);
\draw[dashed,rotate=30] (11,1) arc (0:180:1.45 and 0.3);
\draw[rotate=30] (11,1) arc (0:-180:1.45 and 0.3);
\draw[dashed,rotate=-30] (0.4,5.6) arc (0:180:1.45 and 0.3);
\draw[rotate=-30] (0.4,5.6) arc (0:-180:1.45 and 0.3);
\draw[dashed] (7.35,0.6) arc (-61:242:4.9 and 3.55);
\tikzstyle{every node}=[draw, shape=circle, inner sep=0.9pt, fill=red]
\tikzstyle{every path}=[line width=1.7pt, color=red]
\draw (7.1,-0.6) node[inner sep=1.9pt,label=below:$~~a_i\!\!$] (k1) {}
	 -- (7.2,0.4) node (k2) {} -- (6.6,0.9) node (k3) {}
	 -- (6.4,0.15) node[label=below:{\mbox{\large\boldmath$L_i\quad$}}] (k4) {}
	 -- (k1) ;
\draw (k3) to [out=150,in=210] (6,2.45) node (k5) {};
\draw[dashed] (k5) to [out=0,in=110] (8.1,0.4) node[draw=none] (k6) {}
		-- (8.9,0.44) {};
\draw (9.1,0.44) to [out=340,in=350] (k1);
\draw (3.5,0.2) node (k7) {} to [out=350,in=190] (k4);
\tikzstyle{every path}=[thin, color=red]
\draw (7.1,0.38) arc (-65:285:5.1 and 3.7);
\draw (k5) to [out=200,in=90] (k7);
\tikzstyle{every path}=[thick, color=blue]
\tikzstyle{every node}=[draw, shape=circle, inner sep=1.1pt, fill=blue]
\draw (-4,-3) node[label=above:$\qquad C_i\qquad$] (b1) {} 
	 -- (13,-3) node (b2) {} -- (12.5,6.6) node (b3) {};
\draw (b1) node[label=below:\mbox{\large\boldmath$G_1\quad$}] {};
\draw (-1.5,6.6) node (b4) {} -- (b1);
\draw (b4) -- (0.9,6.6);
\draw (b3) -- (9,6.6);
\draw (b1) -- (-3.5,-5);
\draw (b2) -- (13,-5);
\draw (b2) -- (19,-2) node (b5) {} -- (20,6.5) -- (b3);
\draw (b5) -- (21,-3);
\tikzstyle{every path}=[thin, color=blue]
\tikzstyle{every node}=[draw, shape=circle, inner sep=0.7pt, fill=blue]
\draw (5,3.8) node[fill=white,draw=none] {\mbox{\large\boldmath$T_i$}};
\draw[dashed] (b4) -- (2.6,0.67); \draw (b4) -- (-0.3,4.85);
\draw (b2) -- (8.4,0) to [out=100,in=220] (8.5,1.7);
\draw (b1) -- (1.6,-0.1) to [out=70,in=-45] (1.4,1.77);
\draw (-3.3,-2.6) node[label=above:$\qquad\quad C_i'$] (bb1) {}
	 -- (12.4,-2.6) node {}	-- (12,6.2) node (bb3) {} -- (9.4,6.2);
\draw (bb1) -- (-1.15,6.1) node {} -- (0.5,6.1);
\draw (bb3) -- (b3);
\draw[dashed] (7.35,0.6) -- (bb3); \draw (10.5,4.4) -- (bb3);
\tikzstyle{every path}=[thick, color=red]
\tikzstyle{every node}=[draw, shape=circle, inner sep=1.1pt, fill=red]
\draw[->] (k1) -- (7.2,-5);
\draw (k1) to [out=320,in=190] (16,2) node (r1) {};
\draw[->] (r1) -- (17,-5);
\draw[->] (r1) -- (22,1);
\draw (18,-4.6) node[fill=none,draw=none] {\mbox{\large\boldmath$G_2$}};
\end{tikzpicture}
\end{center}
\caption{A schematic detail of replacing one face anchor with a toroidal
gadget, as used in the proof of Theorem~\ref{thm:face-anchored}
(the torus attaches to the light-gray face via the gray hole).}
\label{fig:Tfaceanchors}
\end{figure}
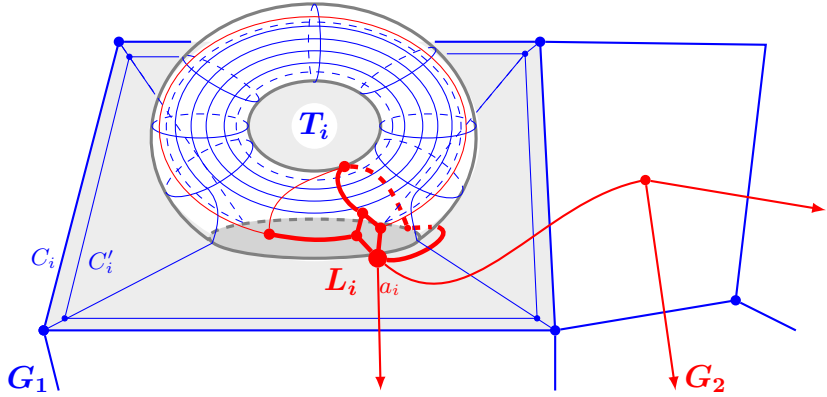

We may assume that each cycle $C_i$ is of length at least~$4$
(otherwise, we just subdivide it).
Our construction of $(H_1',H_2')$ can be shortly outlined as follows.
\begin{enumerate}[i)]
\item
We assign to every edge of $G_1+G_2$ the same suitable weight $p$ (``medium thick'').
The purpose is that already a change in one crossing between $G_1$ and $G_2$
would cause a difference of $p^2$ in the target problem, a value larger than
all future required crossings between ``light'' edges of weight~$1$
and other edges of weight up to~$p$.
\item
For each $i=1,\dots,h$, we create a disjoint copy $C_i'$ of
weight~$1$ of the anchor cycle $C_i$,
and connect each vertex of $C_i'$ with its master copy in $C_i$.
Informally, we ``frame'' the $G_1'$-face bounded by $C_i$ with
$C_i'$ to force a unique plane subembedding, as in Figure~\ref{fig:Tfaceanchors}.
Let $G_1^+$ denote the resulting graph. 

Then we create a graph $T_i$ as follows.
Let $T_i^0$ be a new embedded graph made of a suitable toroidal grid 
after deleting specific two nonadjacent edges incident with the same $4$-cycle, 
to form an $8$-face in it.
$T_i$ is made of the existing cycle $C_i'$ and new $T_i^0$
by connecting the four degree-$3$ vertices of $T_i^0$ with
some four vertices of $C_i'$ in a matching cyclic order
(see again Figure~\ref{fig:Tfaceanchors}, the blue graph).
All the edges of $T_i$ have weight~$1$.
Let $H_1$ denote the resulting graph---the union of $G_1^+$ and of all
$T_i$, for $i=1,\dots,h$.
Note that $H_1$ has an embedding $H_1'$ in the surface $\Sigma\simeq\ca S_h$ 
obtained by adding one toroidal handle to each face bounded by~$C_i'$.
\item\label{it:refcx}
For each $i=1,\dots,h$, we create a new graph $L_i$ which is a copy of
$K_{3,3}$ with seven of its edges (except two incident ones)
made ``very thick'' of weight~$t_i$.
Let $H_2$ denote the graph made of $G_2$ and all $L_i$ after identifying one
vertex of $L_i$ with $a_i$, for $i=1,\dots,h$.
Then $H_2$ has an embedding $H_2'$ in~$\Sigma$,
such that $G_2'$ is a subembedding of~$H_2'$.
See the red graph in Figure~\ref{fig:Tfaceanchors}.

The informal purpose of such construction is two-fold;
first, the nonplanar graph $L_i$ must ``use'' some of the handles of $\Sigma$,
and second, the thick edges of $L_i$ cannot cross any edge of
$G_1'$ which now have weight~$p$.
Consequently, each $L_i$ is ``confined'' to one of the $G_1'$-faces 
$\alpha_j$ bounded by $C_j'$.
Moreover, it will be shown that no two $L_i,L_{i'}$ for $i\not=i'$ 
are confined to the same face~$\alpha_j$.
\item
Additional detailed arguments ensure 
that \ref{it:refcx}) actually confines $L_i$, and hence also the anchor
vertex $a_i$, to $\alpha_i$ for $i=1,\dots,h$.
Briefly explaining this argument: for a sufficiently large integer $t$ we choose
$t_i:=(h+1-i)\cdot t$, and we choose the grid in each $T_j$ gadget 
such that the least number of edges of $T_j$ that have to be crossed 
by a noncontractible loop on the toroidal handle of~$T_j$ equals~$g_j:=5+j$.
\ifaccumulating{%
It is then an easy exercise in calculus to argue that the joint crossing number
is minimized only if ``$t_i$ is matched with~$g_i$''.
}\ifnoaccumulating{%
Then we finish by Lemma~\ref{lem:ordering} since $t$ is very large.}
\end{enumerate}
In other words, informally, an optimal joint embedding solution of $(H_1,H_2)$ must
``contain'' a feasible solution of $(G_1,G_2)$, and an optimal
orientation-preserving homeomorphic solution of $(G_1',G_2')$ ``generates'' 
a good orientation-preserving homeomorphic solution of $(H_1',H_2')$.
\ifaccumulating{Further technical details clarifying the stated proof outline,
and making the target graphs $3$-connected,
are left for the full preprint~\cite{arxivJoint} due to space restrictions.
}
\begin{accumulate}
\ifaccumulating{\subsection{Proof of Theorem~\ref{thm:face-anchored};
	technical details}
	\begin{proof}[continued]
}\ifnoaccumulating{\par\medskip}%
It remains to provide the details and prove correctness of the construction.
For $i=1,\dots,h$, the gadget $T_i$ results from the cycle $C_i'$
(of $G_1^+$, defined above) and a new graph $T_i^0$ made of
a toroidal grid of size $g_j\times(h+6)$ after removing some two edges $e,e'$ where
$e,e'$ are from two $(h+6)$-cycles and belong to the same quadrangle of the grid.
Note that $g_j<h+6$.
Then the four endvertices of former $e,e'$ are joined by
four edges to arbitrary four vertices of $C_i'$ (which is of length $\geq4$, see above)
in a matching cyclic order.
$T_i$ is naturally embedded in the torus as in Figure~\ref{fig:Tfaceanchors},
for $i=1,\dots,h$, and
together the given plane embedding $G_1'$ of $G_1$ this uniquely determines
the embedding $H_1'$ of $H_1$ in~$\ca S_h$.

\vbox{\smallskip
The two key properties of $T_i$ are as follows:
\vspace*{-1ex}
\begin{enumerate}[(T1)]
\item\label{it:Tnonpl} $T_i-V(C_i')$ is nonplanar;
\item\label{it:Tdew}
$T_i$ contains a subdivision of a $g_j\times(h+6)$ toroidal grid, and
hence in any embedding of $T_i$ in the torus, the least number of edges
crossed by a noncontractible loop is at least $g_j$.
The embedding induced by $H_1'$ and depicted in Figure~\ref{fig:Tfaceanchors} 
achieves this lower bound.
\end{enumerate}
}

Recall that $H_2$ results by a disjoint union of $G_2$ and the gadgets $L_i$
after identifying one vertex of $L_i$ with the anchor vertex $a_i$, for $i=1,\dots,h$.
The given embedding $G_2'$ of $G_2$ together with the embedding
of each $L_i$ as depicted in Figure~\ref{fig:Tfaceanchors} then determines
the embedding $H_2'$ of~$H_2$ in~$\ca S_h$.

Let $m=|E(G_1)|\cdot|E(G_2)|$.
We choose the weights in our construction as $p:=8m$ and $t:=(m+1)p^2$.
Let $s$ be the face-anchored joint crossing number of $(G_1,G_2)$,
and let $G'$ be a face-anchored joint embedding of $(G_1,G_2)$
with $s$ crossings (optimum) such that $G'$ restricted to $G_j$ is
orientation-preserving homeomorphic to given~$G_j'$, for $j=1,2$.
Note that $s\leq m$.
Any such $G'$ can be easily extended to a joint orientation-preserving
homeomorphic embedding $H'$ of $(H_1',H_2')$.
We start with an estimate of the weighted crossing number of~$H'$.


Recall that the edges of $G_j$ all have weight $p$ in $H_j$, $j=1,2$.
The subdrawing of $G_1+G_2$ in $H'$ thus contributes precisely
$s\cdot p^2$ to the total crossing number.
For $i=1,\dots,h$, the subdrawing of $T_i+L_i$ contributes at most
$g_i\cdot t_i+h+6+g_i$ and that of $T_i+G_2$ contributes at most
$2p\cdot d_{G_2}(a_i)$.
The weighted crossing number of $H'$ hence can be
estimated from above (with a large margin) by
\begin{eqnarray}
s\cdot p^2 &+& \sum_{i=1}^h g_i\cdot t_i +2h(h+6) +
	 2p\cdot \sum_{i=1}^h d_{G_2}(a_i) \leq
\nonumber\\\label{eq:Jupper}
&\leq& s\cdot p^2 + \sum_{i=1}^h g_i\cdot t_i +2m^2 + 2pm \leq
	s\cdot p^2 + \sum_{i=1}^h g_i\cdot t_i + p^2/2
.\end{eqnarray}

Conversely, we would like to estimate $s$ in terms of the joint 
crossing number $r$ of $(H_1,H_2)$ in the surface $\Sigma\simeq\ca S_h$.
\medskip

Consider a joint embedding $H^0$ of $(H_1,H_2)$ in $\Sigma$
of weighted crossing number~$r$.
Let $H_1^0$ be the $\Sigma$-embedding of $H_1$ induced by $H^0$.
Then, by (T\ref{it:Tnonpl}) and Proposition~\ref{pro:disj-nonplanar-b},
the submebedding of $G_1^+$ in $H_1^0$ is plane and so the anchor faces $\alpha_i$
now bounded by $C_i'$ are well defined in $G_1^+$.
We cut the surface $\Sigma$ simultaneously along $C_i'$, $i=1,\dots,h$.
Let $\Sigma_i$ denote the resulting surface with a boundary $C_i'$ not
containing $C_i$.
Note that it might theoretically happen that $\Sigma_i=\Sigma_j$ for some
$i\not=j$, which means that there is one such subsurface incident with both
of $C_i'$ and $C_j'$ (informally, a handle might ``stretch'' 
from $\alpha_i$ to $\alpha_j$).

We first show that the latter case $\Sigma_i=\Sigma_j$, $i\not=j$, cannot happen.
The collection of surfaces $\{\Sigma_1,\dots,\Sigma_h\}$ (without
repetition) together embeds $h$ nonplanar pairwise-disjoint subgraphs
by (T\ref{it:Tnonpl}), and so the sum of their genera is at least $h$
by Proposition~\ref{pro:disj-nonplanar-a}.
If $\Sigma_i=\Sigma_j$ then the plane of $G_1^+$, when added back to
$\Sigma_i$, would act as an additional handle, making the genus of $\Sigma$
higher than~$h$, a contradiction.
So, $\Sigma_1,\dots,\Sigma_h$ are pairwise distinct and each of genus
exactly~$1$.
Let $\Sigma_i^+$ denote the union of $\Sigma_i$ and the open faces and edges
incident to $C_i'$, that is, the boundary not belonging to $\Sigma_i^+$ is
exactly the cycle~$C_i$.

Second, since $t\cdot p$ largely exceeds the estimate \eqref{eq:Jupper},
no thick edge of any $L_j$ gadget may cross any $C_i$ in the
supposedly optimal drawing~$H^0$.
Hence each nonplanar $L_j$ (except, possibly, of the two thin edges), 
$j=1,\dots,h$, is drawn in one of $\Sigma_1^+,\dots,\Sigma_h^+$.
By Proposition~\ref{pro:k33handle} and the fact that $\Sigma_i^+$ is of
genus~$1$ we get that no two distinct $L_j,L_{j'}$ are drawn in the
same~$\Sigma_i^+$.
Consequently, there is a permutation $\pi$ of $\{1,\dots,h\}$ such that
$V(L_{\pi(i)})$ belongs to $\Sigma_{i}^+$.

By Proposition~\ref{pro:k33handle}, at least one of the cycles of
$L_{\pi(i)}$ of weight $t_{\pi(i)}$ is drawn noncontractible in $\Sigma_{i}^+$,
and so it contributes at least $g_{i}\cdot t_{\pi(i)}$ by (T\ref{it:Tdew})
to the weighted crossing number of $H^0$.
Using the upper estimate \eqref{eq:Jupper} and the fact that
$t=(m+1)p^2>sp^2+p^2/2$, we see that $\sum_{i=1}^h g_i\cdot t_{\pi(i)}$
cannot exceed $\sum_{i=1}^h g_i\cdot t_i$ by $t$ or more,
which in turn by Lemma~\ref{lem:ordering} means that $\pi$ is the identity.
This concludes that $H^0$ restricted to $G_1+G_2$ is a feasible solution of
the face-anchored joint embedding instance $(G_1,G_2)$
with $s\leq\big(r-\sum_{i=1}^h g_i\cdot t_i\big)/p^2$ unweighted crossings.
Plugging this into the estimate \eqref{eq:Jupper} we get
$s=\big\lfloor\big(r-\sum_{i=1}^h g_i\cdot t_i\big)/p^2\big\rfloor$
which establishes correctness of our reduction.
\ifaccumulating{\qed\end{proof}}
\end{accumulate}
\qed\end{proof}

\begin{accumulate}
In order to raise the connectivity premise in
Theorem~\ref{thm:face-anchored}, we shall use the following additional reduction.
Note that this claim is stronger than former Proposition~\ref{pro:weighted}
since we (intentionally) do not assume $3$-connectivity of the original
instance, and that makes our task significantly harder than the former one.

\begin{proposition}\label{pro:face-anchored-3}
On every surface $\ca S_h$, $h\geq1$;
there is a polynomial-time reduction from the problem
{\sc OP-Homeo-Invariant Joint Crossing Number} of a pair of connected graphs,
to {\sc OP-Homeo-Invariant Joint Crossing Number} restricted
to pairs of simple $3$-connected graphs.
\end{proposition}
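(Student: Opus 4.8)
The plan is to reduce the general connected-graph case to the simple $3$-connected case by a structural ``inflation'' of every low-connectivity feature of the input embeddings, while carefully tracking how joint crossings transform. First I would fix heavy weights: as in Theorem~\ref{thm:face-anchored}, assign every original edge of $G_1$ and $G_2$ a common medium weight $p$ (polynomial in the input size), so that one genuine crossing costs $p^2$ and dominates all auxiliary ``light'' crossings we will introduce. By Proposition~\ref{pro:weighted} it then suffices to produce a \emph{weighted} instance on simple $3$-connected graphs. The high-level target is a pair $(H_1',H_2')$ of simple $3$-connected embedded graphs such that the joint (OP-homeomorphic) crossing number of $(H_1',H_2')$ determines that of $(G_1',G_2')$ via an explicit affine formula $r = s\cdot p^2 + (\text{constant auxiliary cost})$.

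The core construction has three layers, carried out in this order. \emph{Simplicity:} subdivide every edge once (this kills multi-edges and loops and never changes a joint embedding, since each subdivided arc is routed exactly along the old one); this already makes both graphs simple without affecting crossings. \emph{Raising connectivity to $2$:} at each cut vertex $v$ of an embedded $G_i$, the rotation at $v$ splits the incident edges into the blocks meeting $v$; I would ``pull apart'' the blocks by replacing $v$ with a small cycle $D_v$ of light edges (weight $1$), attaching each block to its own arc of $D_v$ in the cyclic order dictated by the embedding. Because $D_v$ is drawn as a tiny contractible circle in the face structure at $v$, no new essential crossing is forced, and in the reverse direction any optimal joint embedding of the new graphs can be contracted back across each $D_v$ (the heavy edges are too expensive to cross the light cycle, so nothing ``enters'' the disc bounded by $D_v$ except what must). \emph{Raising connectivity to $3$:} at each $2$-cut $\{u,w\}$ one performs the analogous but slightly more delicate move --- replace $u$ and $w$ by light cycles and wire the $\{u,w\}$-bridges to disjoint arcs, then add a light ``frame'' path joining consecutive bridges so that the separation disappears; again all added edges are light and drawn locally, so the crossing bookkeeping only changes by a fixed polynomial constant. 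Throughout, the OP-homeo-invariance promise is maintained by performing every modification identically on the witnessing embeddings $G_1',G_2'$, so the resulting $H_1',H_2'$ are genuine witnesses for $(H_1,H_2)$.

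The main obstacle I expect is the \textbf{reverse direction at $2$-cuts}: when we split a $2$-cut $\{u,w\}$ into two light cycles joined by a frame, an adversarial optimal joint embedding of $(H_1,H_2)$ need not respect the intended ``block structure'' --- a block of $H_1$ that originally hung off $u$ could be rerouted so that its heavy edges pass on the ``wrong'' side, potentially saving crossings in a way that has no preimage in $(G_1,G_2)$. The fix is a weight/length argument: choose the light frame cycles long enough and the heavy weight $p$ large enough that any heavy edge crossing a frame costs at least $p$, which already exceeds the total light-crossing budget; then an exchange/uncrossing argument shows that in an optimal drawing the frame cycles are met only by the $O(1)$ light edges explicitly attached to them, so the interior of each frame disc is forced to contain exactly the intended blocks in the intended cyclic arrangement, and contracting the discs yields a valid joint embedding of $(G_1,G_2)$ with no more crossings.

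I would finish by assembling the affine correspondence: the auxiliary cost is a fixed polynomial $A$ in the sizes of $G_1,G_2$ (summing the $O(1)$ light crossings contributed locally at each cut vertex, each $2$-cut, and each subdivision), so $s = \lfloor (r - A)/p^2\rfloor$, giving a polynomial-time many-one reduction. Invoking Proposition~\ref{pro:weighted} once more to strip the weights (they are polynomial, hence unary-encodable) completes the proof, and the stripping step preserves simplicity and $3$-connectivity as stated there.
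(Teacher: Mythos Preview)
Your approach is quite different from the paper's, and it has a structural gap.

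The paper does \emph{not} patch up low-connectivity features one at a time. Instead it applies a single uniform transformation: every vertex $v$ of $G_1+G_2$ is blown up into a wheel $W(v)$ of size $3d(v)$ whose edges get a huge weight $10|E(G_1)||E(G_2)|$, and every original edge $uv$ is replaced by three weight-$1$ edges joining a private triple of consecutive rim vertices of $W(u)$ to a matching triple on $W(v)$, in the cyclic order dictated by the given embedding. Simplicity and $3$-connectivity of the resulting graphs are then immediate, and the crossing-number correspondence is an exact factor~$9$: a joint embedding with $k$ crossings yields one with $9k$ crossings (route each triple along the old edge, each wheel in a tiny disc), while conversely the wheel edges are too heavy ever to be crossed in an optimum, so contracting the wheels and keeping one edge from each triple recovers a joint embedding with at most $\ell/9$ crossings. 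No auxiliary additive constant, no case analysis over cut vertices or $2$-cuts.

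In your plan the three layers do not compose. Step~1 (subdivide every edge once) introduces a degree-$2$ vertex on every edge; for each such vertex its two neighbours form a $2$-cut, so after Step~1 you have at least $|E(G_1)|+|E(G_2)|$ new $2$-cuts, and every original vertex participates in $d(v)$ of them. Your Step~3 tells us to ``replace $u$ and $w$ by light cycles'' at each $2$-cut $\{u,w\}$, but a single vertex lies in many $2$-cuts simultaneously, and you never say what a vertex becomes after being processed several times, nor why the procedure terminates with a $3$-connected graph rather than creating fresh $2$-cuts on the newly inserted light cycles (whose vertices may again have degree~$2$). The reverse-direction argument inherits this vagueness: without a concrete description of the final gadget around each vertex, one cannot verify that heavy edges are barred from ``entering'' the local discs in an optimal drawing, and hence cannot justify the contraction step. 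The paper's wheel construction avoids all of this by making the gadget at $v$ depend only on $d(v)$ and the rotation at $v$, not on any global connectivity analysis.
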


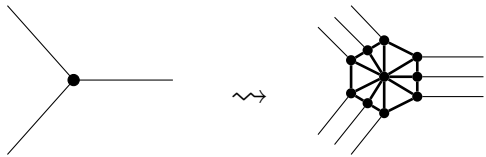
\begin{figure}[ht]
\begin{center}
\begin{tikzpicture}[scale=0.22]
\tikzstyle{every node}=[draw, shape=circle, inner sep=1.5pt, fill=black]
\draw (-2,-1.5) -- (2,3) node {} -- (-2,7.5);
\draw (2,3) -- (8,3);
\end{tikzpicture}
\qquad\raise2em\hbox{\Large$\leadsto$}\qquad
\begin{tikzpicture}[scale=0.22]
\tikzstyle{every node}=[draw, shape=circle, inner sep=1pt, fill=black]
\tikzstyle{every path}=[line width=1.0pt]
\draw (2,3) node (s) {};
\draw (0,2) node (a1) {}; \draw (1,1.4) node (a2) {}; \draw (2,0.8) node (a3) {};
\draw (0,4) node (b1) {}; \draw (1,4.6) node (b2) {}; \draw (2,5.2) node (b3) {};
\draw (4,4.2) node (c1) {}; \draw (4,3) node (c2) {}; \draw (4,1.8) node (c3) {};
\draw (a1) -- (a2) -- (a3) -- (c3) -- (c2) -- (c1) -- (b3) -- (b2) -- (b1) -- (a1);
\draw (a1) -- (s) -- (a2); \draw (a3) -- (s) -- (b1);
\draw (b2) -- (s) -- (b3);
\draw (c2) -- (s) -- (c3); \draw (s) -- (c1);
\tikzstyle{every path}=[thin]
\draw (a1) -- (-2,-0.5); \draw (a2) -- (-1,-1.1); \draw (a3) -- (0,-1.7);
\draw (b1) -- (-2,6); \draw (b2) -- (-1,6.6); \draw (b3) -- (0,7.3);
\draw (c1) -- (8,4.2); \draw (c2) -- (8,3); \draw (c3) -- (8,1.8);
\end{tikzpicture}
\end{center}
\caption{A local detail of the construction in the proof of
Proposition~\ref{pro:face-anchored-3}. The thick edges get assigned a
weight $10\cdot|E(G_1)||E(G_2)|$, and the thin edges are of weight of $1$.}
\label{fig:makeconn-3}
\end{figure}

\begin{proof}
Consider an instance of the {\sc OP-Homeo-Invariant Joint Crossing Number}
problem in $\ca S_h$, that is, a pair of connected graphs $(G_1,G_2)$
and their given embeddings $G_1',G_2'$ in $\ca S_h$.


We construct graphs $G_1^+,G_2^+$ as follows.
Let $d(v)$ denote the degree of a vertex $v\in V(G_1+G_2)$.
We start by ``blowing up'' every vertex $v\in V(G_1+G_2)$ into a wheel
$W(v)$ of size $3d(v)$ (that is, the hub of the wheel has degree $3d(v)$). 
Then we assign to the edges of each such wheel a weight
$10\cdot|E(G_1)|\cdot|E(G_2)|$, (cf.~Proposition~\ref{pro:weighted}). 
Then we replace every edge $e=uv\in E(G_1+G_2)$ by three edges of weight $1$ 
which join consecutive triples of rim vertices of $W(u)$ and $W(v)$ 
in the natural order, respecting the vertex rotation in the corresponding
embedding $G_1'$ or $G_2'$.
This transformation is illustrated in Figure~\ref{fig:makeconn-3}.

Let $(G_1^+,G_2^+)$ be the resulting pair of graphs.
It is straightforward to see that both $G_1^+$ and $G_2^+$ are simple and
$3$-connected, and they embed in $\ca S_h$.

Assume that the joint crossing number of $(G_1,G_2)$ in $\ca S_h$ is~$k$.
Then there exists a joint orientation-preserving homeomorphic embedding $G'$
of $(G_1',G_2')$ with $k$ crossings.
In the drawing $G'$, we choose a sufficiently small open neighbourhood of each vertex
$v\in V(G_1+G_2)$ and draw the wheel $W(v)$ in this neighbourhood, not
crossing any of the wheel edges.
For every edge $e=uv\in E(G_1+G_2)$ we draw the replacement three edges of
$G_1^++G_2^+$ in a small neighbourhood of the drawing of $e$.
This results in a joint embedding of $(G_1,G_2)$ with $3\cdot3\cdot k=9k$
crossings.

Conversely, consider an optimal joint embedding $G^0$ of $(G_1^+,G_2^+)$
in $\ca S_h$ with $\ell$ crossings.
Since obviously, $\ell\leq 9\cdot|E(G_1)|\cdot|E(G_2)|$, we know that none
of the edges of $W(v)$, $v\in V(G_1+G_2)$, is crossed.
By a standard argument, we may assume that all the three edges replacing
in $G_1^++G_2^+$ one edge $e$ of $G_1+G_2$, are routed along the same way.
Hence if we contract every wheel $W(v)$, $v\in V(G_1+G_2)$ into the vertex
$v$ and simplify the triples of resulting parallel edges, 
we obtain a joint embedding of $(G_1,G_2)$ in $\ca S_h$ with at most $\ell/9$ crossings.
\qed\end{proof}
\end{accumulate}

\section{Multiplying Face Anchors}\label{sec:mult}
\label{sec:multiplying}

Recall that our ultimate goal is to find a reduction from a special variant of
the anchored crossing number
problem~\cite{DBLP:journals/siamcomp/CabelloM13},
described in Section~\ref{sec:anchoredCM}.
This can already be achieved with Theorem~\ref{thm:face-anchored}, but such
an approach would require an unbounded number of face anchors
(and hence unbounded genus in the Joint crossing number problem).
We thus present the following construction which ``multiplies'' the number
of available face anchors, albeit in a special position.

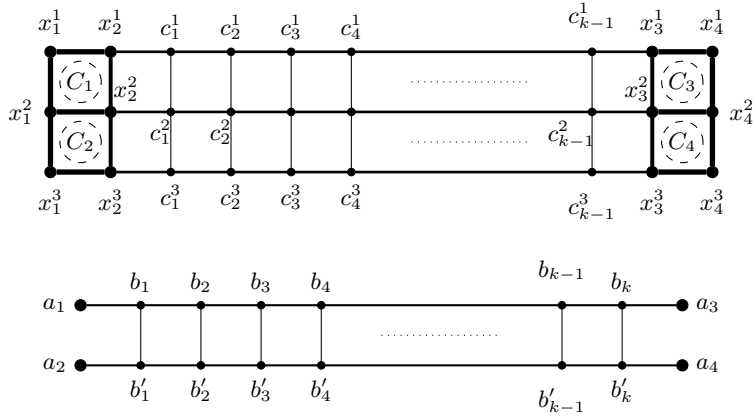
\begin{figure}[t]
\begin{center}
\begin{tikzpicture}[scale=0.4]
\tikzstyle{every node}=[draw, shape=circle, minimum size=2pt,inner sep=1.5pt, fill=black]
\draw (0,0) node[label=below:$x^3_1$] (x1) {}; \draw (2,0) node[label=below:$x^3_2$] (x2) {};
\draw (0,2) node[label=left:$x^2_1$] (y1) {}; \draw (2,2) node[label=above:$x^2_2\hspace*{-3ex}$] (y2) {};
\draw (0,4) node[label=above:$x^1_1$] (z1) {}; \draw (2,4) node[label=above:$x^1_2$] (z2) {};
\draw (20,0) node[label=below:$x^3_3$] (x3) {}; \draw (22,0) node[label=below:$x^3_4$] (x4) {};
\draw (20,2) node[label=above:$\hspace*{-3ex}x^2_3$] (y3) {}; \draw (22,2) node[label=right:$x^2_4$] (y4) {};
\draw (20,4) node[label=above:$x^1_3$] (z3) {}; \draw (22,4) node[label=above:$x^1_4$] (z4) {};
\tikzstyle{every path}=[line width=2pt]
\draw (x2) -- (x1) -- (z1) -- (z2);
\draw (x3) -- (x4) -- (z4) -- (z3);
\draw (y1) -- (y2); \draw (y3) -- (y4);
\tikzstyle{every path}=[line width=1.5pt]
\draw (x2) -- (z2); \draw (x3) -- (z3);
\tikzstyle{every path}=[line width=1.0pt]
\draw (x2) -- (x3); \draw (y2) -- (y3); \draw (z2) -- (z3);
\tikzstyle{every node}=[draw, shape=circle, minimum size=1pt,inner sep=1pt, fill=black]
\tikzstyle{every path}=[thin]
\draw (4,0) node[label=below:$c^3_1$] {} -- (4,2) node[label=below:$\hspace*{-2ex}c^2_1$] {} -- (4,4) node[label=above:$c^1_1$] {};
\draw (6,0) node[label=below:$c^3_2$] {} -- (6,2) node[label=below:$\hspace*{-2ex}c^2_2$] {} -- (6,4) node[label=above:$c^1_2$] {};
\draw (8,0) node[label=below:$c^3_3$] {} -- (8,2) node {} -- (8,4) node[label=above:$c^1_3$] {};
\draw (10,0) node[label=below:$c^3_4$] {} -- (10,2) node {} -- (10,4) node[label=above:$c^1_4$] {};
\draw (18,0) node[label=below:$c^3_{k-1}$] {} -- (18,2) node[label=below:$\hspace*{-4ex}c^2_{k-1}$] {} -- (18,4) node[label=above:$c^1_{k-1}$] {};
\tikzstyle{every path}=[dotted, thin]
\draw (12,1) -- (16,1); \draw (12,3) -- (16,3);
\draw (1,3) node[dashed,fill=none] {$C_1$};
\draw (1,1) node[dashed,fill=none] {$C_2$};
\draw (21,3) node[dashed,fill=none] {$C_3$};
\draw (21,1) node[dashed,fill=none] {$C_4$};
\end{tikzpicture}
\\
\begin{tikzpicture}[scale=0.4]
\tikzstyle{every node}=[draw, shape=circle, minimum size=2pt,inner sep=1.5pt, fill=black]
\draw (1,1) node[label=left:$a_2$] (a1) {};
\draw (1,3) node[label=left:$a_1$] (a2) {};
\draw (21,1) node[label=right:$a_4$] (a3) {};
\draw (21,3) node[label=right:$a_3$] (a4) {};
\tikzstyle{every path}=[thick]
\draw (a1) -- (a3); \draw (a2) -- (a4);
\tikzstyle{every node}=[draw, shape=circle, minimum size=1pt,inner sep=1pt, fill=black]
\tikzstyle{every path}=[thin]
\draw (3,1) node[label=below:$b'_1$] {} -- (3,3) node[label=above:$b_1$] {};
\draw (5,1) node[label=below:$b'_2$] {} -- (5,3) node[label=above:$b_2$] {};
\draw (7,1) node[label=below:$b'_3$] {} -- (7,3) node[label=above:$b_3$] {};
\draw (9,1) node[label=below:$b'_4$] {} -- (9,3) node[label=above:$b_4$] {};
\draw (17,1) node[label=below:$b'_{k-1}$] {} -- (17,3) node[label=above:$b_{k-1}$] {};
\draw (19,1) node[label=below:$b'_k$] {} -- (19,3) node[label=above:$b_k$] {};
\tikzstyle{every path}=[dotted]
\draw (11,2) -- (15,2);
\end{tikzpicture}
\end{center}
\caption{The graphs $F_1$ (top) and $F_2$ (bottom) of the face-anchored joint planar embedding
	problem $\ca F_{k,T}$; the precise weights of the edges are
	specified in (F\ref{it:thickT3})--(F\ref{it:thlast}) below.}
\label{fig:F1F2}
\end{figure}

\begin{figure}[t]
\begin{center}\bigskip
\begin{tikzpicture}[scale=0.4]
\normalsize
\tikzstyle{every node}=[draw, shape=circle, minimum size=2pt,inner sep=1.5pt, fill=black]
\tikzstyle{every path}=[thick, color=blue]
\draw (0,0) node (x1) {}; \draw (2,0) node (x2) {};
\draw (0,2) node[label=left:\mbox{\boldmath$F_1$}] (y1) {}; \draw (2,2) node (y2) {};
\draw (0,4) node (z1) {}; \draw (2,4) node (z2) {};
\draw (20,0) node (x3) {}; \draw (22,0) node (x4) {};
\draw (20,2) node (y3) {}; \draw (22,2) node (y4) {};
\draw (20,4) node (z3) {}; \draw (22,4) node (z4) {};
\tikzstyle{every path}=[line width=2pt, color=blue]
\draw (x2) -- (x1) -- (z1) -- (z2);
\draw (x3) -- (x4); \draw (z4) -- (z3);
\draw (y1) -- (y2); \draw (y3) -- (y4);
\tikzstyle{every path}=[line width=1.5pt, color=blue]
\draw (x2) -- (z2); \draw (x3) -- (z3);
\draw (x4) -- (z4);
\tikzstyle{every path}=[line width=1.0pt, color=blue]
\draw (x2) -- (x3); \draw (y2) -- (y3); \draw (z2) -- (z3);
\tikzstyle{every node}=[draw, shape=circle, minimum size=1pt,inner sep=1pt, fill=black]
\tikzstyle{every path}=[thick, color=blue]
\draw (4,0) node {} -- (4,2) node {} -- (4,4) node {};
\draw (6,0) node {} -- (6,2) node {} -- (6,4) node {};
\draw (8,0) node {} -- (8,2) node {} -- (8,4) node {};
\draw (10,0) node {} -- (10,2) node {} -- (10,4) node {};
\draw (18,0) node {} -- (18,2) node {} -- (18,4) node {};
\tikzstyle{every path}=[thin, color=black]
\draw (1,3) node[dashed,fill=none, minimum size=3ex] {};
\draw (1,1) node[dashed,fill=none, minimum size=3ex] {};
\draw (21,3) node[dashed,fill=none, minimum size=3ex] {};
\draw (21,1) node[dashed,fill=none, minimum size=3ex] {};
\tikzstyle{every node}=[draw, shape=circle, minimum size=2pt,inner sep=1.5pt, fill=black]
\tikzstyle{every path}=[thick, color=red]
\draw (1,1) node (a1) {};
\draw (1,3) node (a2) {};
\draw (21,1) node[label=right:\mbox{\boldmath$\quad~F_2$}] (a3) {};
\draw (21,3) node (a4) {};
\draw (a1) -- (a3); \draw (a2) -- (a4);
\tikzstyle{every node}=[draw, shape=circle, minimum size=1pt,inner sep=1pt, fill=black]
\tikzstyle{every path}=[thin, color=red]
\draw (3,1) node {} -- (3,3) node {};
\draw (5,1) node {} -- (5,3) node {};
\draw (7,1) node {} -- (7,3) node {};
\draw (9,1) node {} -- (9,3) node {};
\draw (17,1) node {} -- (17,3) node {};
\draw (19,1) node {} -- (19,3) node {};
\end{tikzpicture}
\end{center}
\caption{Supposed crossing-optimal face-anchored joint planar embedding of
	$\ca F_{k,T}$.} 
\label{fig:F1F2joint}
\end{figure}
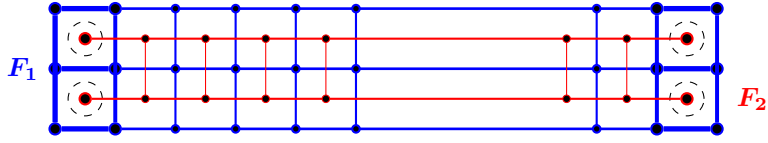

Let $F_1$ be the graph of the $3\times(k+3)$ plane grid,
and $F_2$ be obtained from the $2\times(k+2)$ plane grid by removing
the two side edges (making a ``ladder''), with notation as in Figure~\ref{fig:F1F2}.
Let $C_1$ denote the cycle $(x^1_1,x^1_2,x^2_2,x^2_1)$ of $F_1$,
and $C_2,C_3,C_4$ the cycles $(x^2_1,x^2_2,x^3_2,x^3_1)$,
$(x^1_3,x^1_4,x^2_4,x^2_3)$, $(x^2_3,x^2_4,x^3_4,x^3_3)$.
The weights of the edges of $F_1$ are as follows
(where $T$ is a large integer):
\begin{enumerate}[(F1)]
\item\label{it:thickT3}
weight $T^3$ for the six edges
$x^1_2x^2_2,x^2_2x^3_2,x^1_3x^2_3,x^2_3x^3_3,x^1_4x^2_4,x^2_4x^3_4$ and
\item\label{it:thickT4}
weight $T^4$ for the remaining eight edges induced on the vertex set
$\{x^i_j:i\in\{1,2,3\},j\in\{1,2,3,4\}\}$
(yes, this part is intentionally not symmetric),
\item\label{it:thickT2}
weight $T^2$ for every ``horizontal'' edge on the shortest paths from $x^i_2$ to
$x^i_3$, for $i=1,2,3$,
\item\label{it:thickT1}
weight $jT$ for the ``vertical'' edges $c^1_{j}c^2_{j}$ and
$c^2_{k-j}c^3_{k-j}$, for \mbox{$j=1,2,\dots,k-1$}.
\end{enumerate}

The weights of the edges of $F_2$ are as follows:
\begin{enumerate}[(F1)]\setcounter{enumi}{4}
\item\label{it:horizk3}
weight $t_{j-1}$ for the ``horizontal'' edges $b'_{j-1}b'_{j}$ and $b_{k+2-j}b_{k+1-j}$,
for $j=1,2,\dots,k+1$ where $b_0=a_1,b'_0=a_2,b_{k+1}=a_3,b'_{k+1}=a_4$, and
$t_j$ is defined by $t_0=k^3$ and $t_j=t_{j-1}+j$,
\item\label{it:thlast} weight $k+1$ for all the ``vertical'' edges $b_{j}b'_{j}$,
for $j=1,2,\dots,k$.
\end{enumerate}

Finally, we shortly denote by $\ca F_{k,T}$ the joint planar embedding
instance of $(F_1,F_2)$ with the set of four face anchors $\{(C_i,a_i):i=1,2,3,4\}$.
\ifaccumulating{The details of the following claim 
 are left for the full preprint~\cite{arxivJoint} due to space restrictions:}

\begin{lemma}
\label{lem:F1F2}
For every sufficiently large $k$ and $T=\Omega(k^6)$,
every joint planar embedding solution of $\ca F_{k,T}$ other than
the one depicted in Figure~\ref{fig:F1F2joint}
has its weighted crossing number exceeding that of
Figure~\ref{fig:F1F2joint} by at least~$T$.
Moreover, if a solution of $\ca F_{k,T}$ draws any one of the vertices
$b_i$ or $b_i'$ for $i\in\{1,\dots,k\}$ in the $F_1$-face incident with
both $x_2^1,x_3^1$, then its weighted crossing number exceeds the optimum by
at least $\Omega(k^3)\cdot T^2$.
\end{lemma}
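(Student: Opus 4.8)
The plan is to analyze the structure of an arbitrary face-anchored joint planar embedding solution of $\ca F_{k,T}$ and show, layer by layer of decreasing edge weight, that it must coincide with Figure~\ref{fig:F1F2joint} up to negligible freedom. First I would observe that a single crossing between two of the weight-$T^4$ edges would already contribute $T^8$, and by a generous counting of how many crossings any reasonable drawing can have (polynomial in $k$, since $|E(F_1)|,|E(F_2)|$ are), this is far larger than the entire budget achievable by the reference drawing (whose weighted crossing number one computes to be roughly $\sum_j (jT)\cdot t_j \approx O(k^2)\cdot k^3\cdot T = O(k^5)\,T$, plus lower-order terms); hence \emph{no} pair of weight-$T^4$ edges may cross, and likewise no weight-$T^4$ edge may cross a weight-$T^3$ or weight-$T^2$ edge, provided $T=\Omega(k^6)$ so that these costs dominate. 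This pins down the two ``heavy blocks'' of $F_1$ at the left ($x^i_1,x^i_2$) and right ($x^i_3,x^i_4$), together with the intentionally asymmetric arrangement of $T^3$ versus $T^4$ edges, forcing the cycle that bounds the central $F_1$-face incident with both $x^1_2$ and $x^1_3$ to have the topology shown, and similarly for $x^2_2,x^2_3$ and $x^3_2,x^3_3$. In particular the four anchor cycles $C_1,C_2,C_3,C_4$ sit in the configuration of Figure~\ref{fig:F1F2joint}, and the only way for the two thick edges of $F_2$ (the $a_1a_3$- and $a_2a_4$-paths) to connect their prescribed anchor faces is to pass ``horizontally'' through the grid of vertical $F_1$-edges $c^h_jc^{h+1}_j$, crossing each such edge they traverse.

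**Second layer.** Next I would handle the medium-weight interaction. Having fixed the coarse topology, the $F_2$-path joining $a_1$ (in face $\alpha_1$, bounded by $C_1$) to $a_3$ (in face $\alpha_3$, bounded by $C_3$) must cross some subset of the vertical $F_1$-edges; by planarity of $F_1$'s skeleton and the location of the anchor faces, the cheapest way is to cross exactly one vertical edge in each column, namely the ones of weight $jT$ in the ``upper'' row-pair, and symmetrically the $a_2a_4$-path crosses the ``lower'' vertical edges of weight $(k-j)T$. Any deviation---crossing two vertical edges in some column, or routing a path around through a heavier region---costs at least an extra $T$ (one more weight-$\ge T$ vs. weight-$\ge 1$ crossing), or vastly more if it forces a collision with the $T^2$, $T^3$ or $T^4$ edges; this is exactly where the ``$+T$'' gap in the statement comes from, after checking the few local alternatives. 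The weights $T^2$ on the central horizontal $F_1$-edges serve precisely to make it strictly cheaper for each $F_2$-path to stay in its own horizontal channel rather than drift vertically across a $T^2$-edge, again a gap of order $T^2>T$.

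**Third layer and the ordering argument.** Finally, with the routing channels fixed, the only remaining freedom is the \emph{assignment} of the $k$ vertical rungs $b_jb_j'$ of $F_2$ (of weight $k+1$) to the $k-1$ gaps between consecutive columns of $F_1$, equivalently the order in which the $F_2$-ladder threads through the $F_1$-columns. Here I would invoke Lemma~\ref{lem:ordering}: the upper horizontal $F_2$-edges have weights $t_0<t_1<\dots$ and they get crossed by the vertical $F_1$-edges whose weights $jT$ are strictly \emph{decreasing} in the matched order from the other side (because of the $c^2_{k-j}c^3_{k-j}$ half of (F\ref{it:thickT1})), so the total contribution $\sum t_{\pi(i)}\cdot(\text{weight})$ is minimized exactly at the identity permutation, and any transposition costs at least $\min_{i\ne j}|t_i-t_j|\cdot T \ge T$. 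This both certifies that the reference drawing is optimal and gives the $+T$ separation for every other solution. For the ``moreover'' clause: if some $b_i$ or $b_i'$ is drawn in the central $F_1$-face incident with both $x^1_2$ and $x^1_3$, then the corresponding vertical rung $b_ib_i'$ plus the two adjacent horizontal $F_2$-edges must enter and leave that face, and to do so they must repeatedly cross the weight-$T^2$ central horizontal edges $x^i_2\!\cdots\!x^i_3$; a short case analysis shows at least $\Omega(k)$ such crossings are forced (one roughly per column of the $\Theta(k)$-long central region the rung is now isolated from), giving the claimed $\Omega(k^3)\cdot T^2$ overshoot once one also accounts for the $t_j=\Theta(k^3)$ weights of the horizontal $F_2$-edges involved.

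**Main obstacle.** The delicate part is the \emph{second layer}: proving rigorously that once the heavy skeleton is fixed, there is no clever global rerouting of an $F_2$-path that avoids crossing one vertical $F_1$-edge per column while still reaching its anchor face, without incurring a compensating heavy crossing. This is a planarity/topological argument about how a curve from one prescribed face to another must interact with the dual structure of the plane grid $F_1$, and it is where the asymmetric weighting (F\ref{it:thickT3})--(F\ref{it:thickT4}) and the $T^2$-channels do their real work; I expect this to require the most careful bookkeeping, whereas the outer layer (no heavy--heavy crossings) and the inner layer (the ordering lemma) are comparatively mechanical.
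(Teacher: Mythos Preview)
Your layered-by-weight plan matches the paper's skeleton, but several concrete steps are off. Most fundamentally, your estimate of the reference drawing's cost is wrong: the dominant term is not $O(k^5)T$ but $(4k^3+k^2+k)\,T^3$, because the horizontal $F_2$-paths $Q_1,Q_2$ must cross the weight-$T^3$ edges $x^i_2x^{i+1}_2$ and $x^i_3x^{i+1}_3$ to leave and enter the anchor faces, and the $F_2$-edges doing so have weight $t_0$ or $t_k\approx k^3$. (Your remarks about $T^4$--$T^4$ and $T^4$--$T^3$ crossings are vacuous: all those edges lie in $F_1$, and in a joint embedding only $F_1$--$F_2$ crossings occur.) The tool the paper uses at the $T^3$ layer, and which you never invoke, is a \emph{min-cut argument in $F_2$}: the smallest edge cut separating any one anchor $a_i$ from the other three has weight $t_0=k^3$ (or $t_k$), so the $T^3$-crossing cost in any feasible solution is \emph{at least} $(2t_0+2t_k)T^3$, matching the reference exactly and forbidding any further crossing with a weight-$T^3$ or $T^4$ edge. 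You correctly flag the second layer as the delicate part, but the min-cut is precisely what replaces the vague ``planarity/topological'' argument you anticipate.

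The ``moreover'' clause follows by the same min-cut idea, not by ``$\Omega(k)$ crossings, one per column'': if some $b_j$ lies in the big outer $F_1$-face incident with $x^1_2,x^1_3$, then a \emph{single} horizontal $F_2$-edge of weight $\ge t_0=k^3$ must cross a weight-$T^2$ edge of $R^1$ (or $R^3$), costing $\ge k^3T^2$ on top of the reference's $\beta(k,T)=O(k^2)T^2+O(k^5)T$; that is already the claimed $\Omega(k^3)T^2$ excess. For the final layer you cannot apply Lemma~\ref{lem:ordering} directly, because the map $j\mapsto\iota(j)$ sending $b_j$ to its face $B_{\iota(j)}$ need not be a bijection (several $b_j$ may share a face), and $b_j,b_j'$ need not lie in the same column. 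The paper instead runs an inductive local-improvement (``sliding'') argument: first force each rung $b_jb_j'$ to cross $R^2$ exactly once (else redraw it to save $>T$); then if $\iota(j)\ne\iota'(j)$ slide one endpoint one face over to save $>T$; finally if $\iota(j)=\iota'(j)=i\ne j$ slide both $b_j,b_j'$ toward column $j$ and compute the net saving to be at least~$T$.
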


\begin{accumulate}
\ifaccumulating{\subsection{Proof of Lemma~\ref{lem:F1F2};
	technical details}}
\begin{proof}
Note that a planar embedding of $F_1$ itself is unique.
The weighted crossing number of the joint planar embedding solution to $\ca
F_{k,T}$ in Figure~\ref{fig:F1F2joint} is
{\small
\begin{eqnarray} \label{eq:T3beta}
T^3\cdot(2k^3&+&2(k^3+1+\dots+k)) + \beta(k,T) =
  (4k^3+k^2+k)T^3 + \beta(k,T)
,\\\nonumber\mbox{where }
\beta(k,T) &=\,& k(k+1)T^2 + 2\sum_{j=1}^{k-1}
		(k-j)T(k^3+1+\dots+j)
\\	&=&  k(k+1)T^2 + T\sum_{j=1}^{k-1}(2k^4+kj^2+kj-2k^3j-j^3-j^2)
\nonumber
\\\label{eq:T2gamma}	&=&
\ifnoaccumulating{  (k^2+k)T^2 + \gamma(k)T =}
		 \ca O(k^2)\,T^2 + \ca O(k^5)\,T
\ifnoaccumulating{%
\\\nonumber\mbox{since }
\gamma(k) &=\,& 2k^4(k-1) -\sum_{j=1}^{k-1}j^3 +(k-1)\sum_{j=1}^{k-1}j^2
		-(2k^3-k)\sum_{j=1}^{k-1}j
\\      &=&  2k^4(k-1) -\frac14k^2(k-1)^2 +\frac16k(k-1)^2(2k-1)
		-\frac12k(k-1)(2k^3-k)
\nonumber
\\      &=&  \frac1{12}\big(12k^5-11k^4+2k^3-k^2-2k\big) = \ca O(k^5)
\nonumber.
}%
\end{eqnarray}
}%

The weight of the smallest edge cut separating any one of 
$a_2,a_3$ from the remaining three anchor vertices is $t_0=k^3$,
and one can argue by case-checking that the smallest edge cut separating 
any one of  $a_1,a_4$ from the remaining three anchor vertices is of weight 
$t_{k}=k^3+\frac12(k^2+k)$.
Since we may easily assume that the $F_1$-edges listed in (F\ref{it:thickT4})
are not crossed due to their extreme weight $T^4>\!\!>T^3$,
we can directly conclude that the unavoidable crossings between the edges
of $F_2$ and the edges of $F_1$ listed in (F\ref{it:thickT3})
contribute at least $(t_0+t_0+t_k+t_k)\cdot T^3=(4k^3+k^2+k)\cdot T^3$ to
the total sum, as in Figure~\ref{fig:F1F2joint}.

Comparing the latter quantity to \eqref{eq:T3beta} and \eqref{eq:T2gamma}
we see that, for sufficiently large~$k$,
we can assume that there is no other crossing (than what is mentioned in the
previous paragraph) of $F_2$ with the edges listed in 
(F\ref{it:thickT3}),(F\ref{it:thickT4}).
In particular, the shortest (``horizontal'') path $Q_i$, $i=1,2$,
in $F_2$ connecting $a_i$ to $a_{i+2}$,
crosses the two edges $x_2^{i}x_2^{i+1}$,$x_3^{i}x_3^{i+1}$ but not, e.g.,
the edge $x_4^{i}x_4^{i+1}$.
Moreover, by a finer resoultion of \eqref{eq:T2gamma}, we see
that either there is no crossing between the edges of $F_1$ specified in (F\ref{it:thickT2})
and the edges of $F_2$ listed in (F\ref{it:horizk3}),
or that the optimum is exceeded by at least 
$(t_0-\ca O(k^2))\cdot T^2\geq\Omega(k^3)\cdot T^2$ weighted crossings, as desired.
Consequently, in what follows we may assume 
that the path $Q_i$ is contained in the region bounded by
$(x_1^ix_2^ic_1^i\dots c_{k-1}^ix_3^ix_4^ix_4^{i+1}x_3^{i+1}\dots
c_1^{i+1}x_2^{i+1}x_1^{i+1}x_1^i)$.

Informally, we have reached a situation with a joint embedding quite similar
to that of Figure~\ref{fig:F1F2joint}, only that we do not know how the
non-anchor vertices of $F_2$ are distributed into the square faces of~$F_1$
(while we aim for the natural ordered one-to-one assignment).
The final technical step in the proof can now be achieved using 
arguments quite similar to those of Lemma~\ref{lem:ordering}.

\medskip
Let $c^i_0=x^i_2$ and $c^i_k=x^i_3$ for $i=1,2,3$.
Let $B_1,\dots,B_k$ and $B'_1,\dots,B'_k$ denote
(from left to right) the non-anchor square faces of $F_1$;
where $B_j$ is the face bounded by $(c^1_{j-1},c^1_j,c^2_j,c^2_{j-1})$
and $B'_j$ is bounded by $(c^2_{j-1},c^2_j,c^3_j,c^3_{j-1})$.
Let $\iota(j)$ be such that the $F_2$-vertex $b_j$ is drawn inside the
$F_1$-face $B_{\iota(j)}$, and $\iota'(j)$ be such that $b'_j$ is drawn
inside $B_{\iota'(j)}$.
We aim to show that $\iota(j)=\iota'(j)=j$ for $j=1,\dots,k$,
or the crossing number of the considered joint embedding solution is by at
least $T$ more than the optimum~\eqref{eq:T3beta}.
This will follow by a straightforward induction if we prove that in any case
violating $\iota(j)=\iota'(j)=j$ there is a local change in the joint
embedding which decreases the crossing number by at least~$T$.

Let $R^i$, $i=1,2,3$, denote the shortest ``horizontal'' path in $F_1$
from $x^i_2$ to $x^i_3$, which have weights $T^2$ by (F\ref{it:thickT2}).
First, we argue that each of the ``vertical'' $F_2$-edges $b_jb'_j$ crosses
only one edge of $R^1\cup R^2\cup R^3$ (and so it crosses $R^2$).
Suppose not, then we can redraw $b_jb'_j$ across $R^2$ and its incident
edges, using at most $T^2+k^2T<2T^2-T$ weighted crossings---see
(F\ref{it:thickT2}),(F\ref{it:thickT1}).
This new drawing hence saves at least $T$ weighted crossings on $b_jb'_j$,
as needed for our inductive argument.
Second, if (up to symmetry) $\iota(j)\leq\iota'(j)$ and $b_jb'_j$ crosses
the ``vertical'' $F_1$-edge $c^2_{\iota'(j)}c^3_{\iota'(j)}$, then
we can again redraw $b_jb'_j$ with saving more than $T$ weighted crossings.

Now, assume that $\iota(j)\not=\iota'(j)$ for some $j\in\{1,\dots,k\}$.
By the previous, the ``vertical'' $F_2$-edge $b_jb'_j$ cannot cross both of
the $F_1$-edges $c^2_{\iota(j)-1}c^2_{\iota(j)}$,
$c^2_{\iota'(j)-1}c^2_{\iota'(j)}$.
Up to symmetry, it is $\iota(j)<i=\iota'(j)$ and
$b_jb'_j$ avoids crossing $c^2_{i-1}c^2_{i}$.
Then $b_jb'_j$ has to cross the ``vertical'' $F_1$-edge $c^2_{i-1}c^3_{i-1}$
which costs $(k+1)(k-i+1)\,T$ in weighted crossing by (F\ref{it:thickT1}).
Recall that also $b'_{j-1}b'_j$ crosses $c^2_{i-1}c^3_{i-1}$.
Hence if we ``slide'' the vertex $b'_j$ along $b'_{j-1}b'_j$ to the face
$B'_{i-1}$, we avoid (at least) the crossing between
$c^2_{i-1}c^3_{i-1}$ and $b_jb'_j$, and replace the crossing of
$c^2_{i-1}c^3_{i-1}$ with $b'_{j-1}b'_j$ by that with $b'_{j}b'_{j+1}$.
Since the difference between the weights of $b'_{j-1}b'_j$ and
$b'_{j}b'_{j+1}$ is at most $k$ by (F\ref{it:horizk3}), 
the change in the weighted crossing number
of the whole joint embedding is $\leq k(k-i+1)\,T-(k+1)(k-i+1)\,T<-T$,
again as needed for our inductive argument.

Hence we may assume that always $\iota(j)=\iota'(j)$ but,
up to symmetry, $\iota(j)=i>j$ for some $j\in\{1,\dots,k\}$.
Analogously to the previous paragraph, we now ``slide'' each one of 
the vertices $b_j$ and $b'_j$ along the edges
$b_{j-1}b_j$ and $b'_{j-1}b'_j$ to the faces $B_{i-1}$ and $B'_{i-1}$, respectively.
The change in edge crossings is as follows:
$c^1_{i-1}c^2_{i-1}$ is newly crossed by $b_{j}b_{j+1}$ instead of former $b_{j-1}b_j$,
~$c^2_{i-1}c^3_{i-1}$ crossed by $b'_{j}b'_{j+1}$ instead of $b'_{j-1}b'_j$,
and $b_jb'_j$ crosses $c^2_{i-2}c^2_{i-1}$ instead of $c^2_{i-1}c^2_{i}$.
This by (F\ref{it:horizk3}) leads to the following change in the weighted crossing number
\begin{eqnarray*}
(i-1)T&\cdot(t_{k-j}&-t_{k+1-j}) + (k+1-i)T\cdot(t_j-t_{j-1}) +0
	\\&=& T\cdot[ (i-1)(j-k-1) + (k+1-i)j ]
	\\&=& T\cdot[(j-i+1)(k+1)-j] \leq -jT\leq-T
,\end{eqnarray*}
which is as desired.
The proof is finished.
\qed\end{proof}
\end{accumulate}

\ifaccumulating{\section{Reduction from Anchored Planar Crossing Number}}%
\ifnoaccumulating{\section{Reduction from Anchored Planar Crossing Number:
	 Proof of Theorem~\ref{thm:maintheorem}}}
\label{sec:reduction}
\label{sec:anchoredCM}

We prove our main theorem at the end of this section. 
The additional ingredient we need is the hardness of a special variant
of the so-called anchored crossing number problem in the plane.
In general, an {\em anchored
drawing}~\cite{DBLP:journals/siamcomp/CabelloM13} of a graph $G$ is a drawing of $G$ in
a closed disc~$D$ such that a set $A\subseteq V(G)$ of selected {\em anchor} vertices
are placed in specific points of the boundary of $D$ and the rest of the
drawing lies in the interior of~$D$.

We shall use the following very restrictive version of the problem
which we call the {\em anchored crossing number of a pair of planar graphs}:
The input is a pair of disjoint connected planar graphs $(G_1,G_2)$, their anchor sets
$A_1\subseteq V(G_1)$ and $A_2\subseteq V(G_2)$, and a cyclic permutation
$\sigma$ of $A_1\cup A_2$.
The task is to find the minimum number of crossings over all anchored
drawings of $G_1+G_2$ such that the anchors appear on the disk boundary in
the cyclic order specified by~$\sigma$.
As before, the problem is considered in the edge weighted form.

\begin{theorem}[Cabello and Mohar, \cite{DBLP:journals/siamcomp/CabelloM13}]
\label{thm:CMhard}
The anchored weighted crossing number problem of the pair of planar graphs
$(G_1,G_2)$, with anchor sets $(A_1,A_2)$ and permutation $\sigma$,
is NP-hard even under the following assumptions:
\begin{enumerate}[({A}1)]
\item\label{it:CMunique}
each of the graphs $G_1,G_2$ itself has a unique anchored embedding, and
\item\label{it:CMcut}
there is a partition $A_2=A_2^1\cup A_2^2\cup A_2^3\cup A_2^4$ such that, for
$i=1,2,3,4$, the set $A_2^i$ is consecutive in $\sigma$ restricted to $A_2$,
and the set of edges incident with $A_2^i$ forms a minimum weight cut in $G_2$
separating $A_2^i$ from $A_2\setminus A_2^i$.
\end{enumerate}
\end{theorem}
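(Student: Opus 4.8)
The statement is, in essence, Theorem~(anchored crossing number) of Cabello and Mohar~\cite{DBLP:journals/siamcomp/CabelloM13}; what still has to be established is that the instances produced by their NP-hardness reduction already satisfy the two extra structural guarantees (A\ref{it:CMunique}) and (A\ref{it:CMcut}), possibly after a modest amount of padding. The plan is therefore to revisit that reduction, which builds from a suitable NP-hard source problem a pair of planar graphs drawn in a disk with anchors on its boundary: one graph, say $G_1$, is a rigid ``frame'' whose anchored drawing is forced, while the other, $G_2$, consists of bundles of weighted paths whose routing across the frame encodes the source instance, and crossings are charged only between $G_1$ and $G_2$. This is precisely our pair-of-planar-graphs setting, so it remains to verify the two assumptions and, if necessary, adjust weights.

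First I would check (A\ref{it:CMunique}). Each of $G_1,G_2$ in that construction is, up to subdivision, a $2$- or $3$-connected planar graph together with a fixed set of boundary anchors; since the anchor positions on the disk boundary are pinned and each graph is rigid (its sphere rotation system is unique, and one face is singled out by the anchors), the anchored embedding of each $G_i$ in isolation is unique. Should the original graphs fail to be rigid, one rigidifies by adding, inside the disk, a cycle through the anchors carrying weight larger than the whole crossing budget, so that this cycle is never crossed in any reasonable drawing; this forces uniqueness, keeps the graphs connected and planar, and does not change the optimum.

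Next I would verify (A\ref{it:CMcut}). By construction $G_2$ meets the frame only at its anchors, and these anchors fall naturally into four groups according to which part of the encoding they feed; reordering $\sigma$ restricted to $A_2$ so that each group is consecutive is compatible with the planar anchored drawing of $G_2$, since the groups correspond to the four ``corners'' of the disk picture. For each group $A_2^i$, the edges of $G_2$ incident with $A_2^i$ are exactly the edges leaving the corresponding subgraph, hence a cut separating $A_2^i$ from $A_2\setminus A_2^i$; by assigning these ``terminal'' edges the appropriate weights (or, if needed, by routing a parallel thin path that makes every alternative cut heavier) one makes this cut of minimum weight. All weights remain polynomial, so the weighted form with unary encoding is respected.

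The main obstacle is reconciling (A\ref{it:CMunique}) and (A\ref{it:CMcut}) with the correctness of the reduction at the same time: the rigidifying cycle must not disturb the cut structure, and forcing each of the four blocks to be a \emph{minimum} cut must not alter which drawings are optimal. I would handle this by imposing a strict weight hierarchy --- frame edges $\gg$ terminal edges of $G_2$ $\gg$ internal edges of $G_2$ --- so that no near-optimal drawing ever crosses a frame edge or a terminal edge. This simultaneously pins the block-cut structure of $G_2$, enforces the unique anchored embeddings, and leaves the residual combinatorial optimization (and hence the NP-hardness) exactly as in the original Cabello--Mohar argument.
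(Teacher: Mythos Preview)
Your overall strategy---revisit the Cabello--Mohar reduction and check that its instances already satisfy (A\ref{it:CMunique}) and (A\ref{it:CMcut})---is exactly what the paper does. The paper does not give a self-contained proof either; it simply states that these two conditions ``are not explicitly stated in~\cite{DBLP:journals/siamcomp/CabelloM13} but can easily be verified there,'' and points to a figure reproducing the construction.

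Where you diverge is in the contingency plan. You hedge by proposing auxiliary modifications (a rigidifying boundary cycle for (A\ref{it:CMunique}), tweaked terminal-edge weights or parallel thin paths for (A\ref{it:CMcut}), a fresh weight hierarchy to reconcile the two). The paper makes the sharper claim that \emph{no} such padding is needed: both conditions hold for the Cabello--Mohar instances as they stand. In particular, for (A\ref{it:CMcut}) the paper singles out the specific weight pattern already present in $G_2$---solid edges of weight $w$ at the four corner groups versus ``diagonal'' dashed edges of weight $w-1$ in the interior---and observes that this forces each corner block $A_2^i$ to be separated from the rest by the edges incident to it as a \emph{minimum} cut, while any cut separating a union of two adjacent blocks must use the lighter diagonal edges and hence is not realised at the boundary. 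So the four-way partition is not an artifact but is genuinely forced by the existing weights.

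Your fallback modifications are not wrong in principle, but they are extra work you would then have to re-thread through the correctness argument of the original reduction (your last paragraph acknowledges this tension). The cleaner route, and the one the paper takes, is to inspect the actual construction and argue directly from its given weights that (A\ref{it:CMunique}) and (A\ref{it:CMcut}) are already there.
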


Figure~\ref{fig:CM} illustrates the hardness construction used in
\cite{DBLP:journals/siamcomp/CabelloM13}, and the conditions
(A\ref{it:CMunique}) and (A\ref{it:CMcut}) of Theorem~\ref{thm:CMhard},
which are not explicitly stated in \cite{DBLP:journals/siamcomp/CabelloM13}
but can easily be verified there.

Notice, moreover, in Figure~\ref{fig:CM} that the graph $G_2$ also has
some ``diagonal'' minimum weight cuts which use the dashed red edges of
weight only~$w-1$.
Hence, for example, every minimum weight cut of $G_2$ between $A_2^1\cup A_2^2$
and $A_2^3\cup A_2^4$ has to use some of the dashed red edges and so
cannot have all its edges incident to~$A_2$.
Consequently, the partition of $A_2$ into the four sets in (A\ref{it:CMcut})
is not just an artifact of the visual shape of $G_2$ in Figure~\ref{fig:CM}
but necessity.

\begin{figure}[t]
$$\includegraphics[width=0.39\hsize]{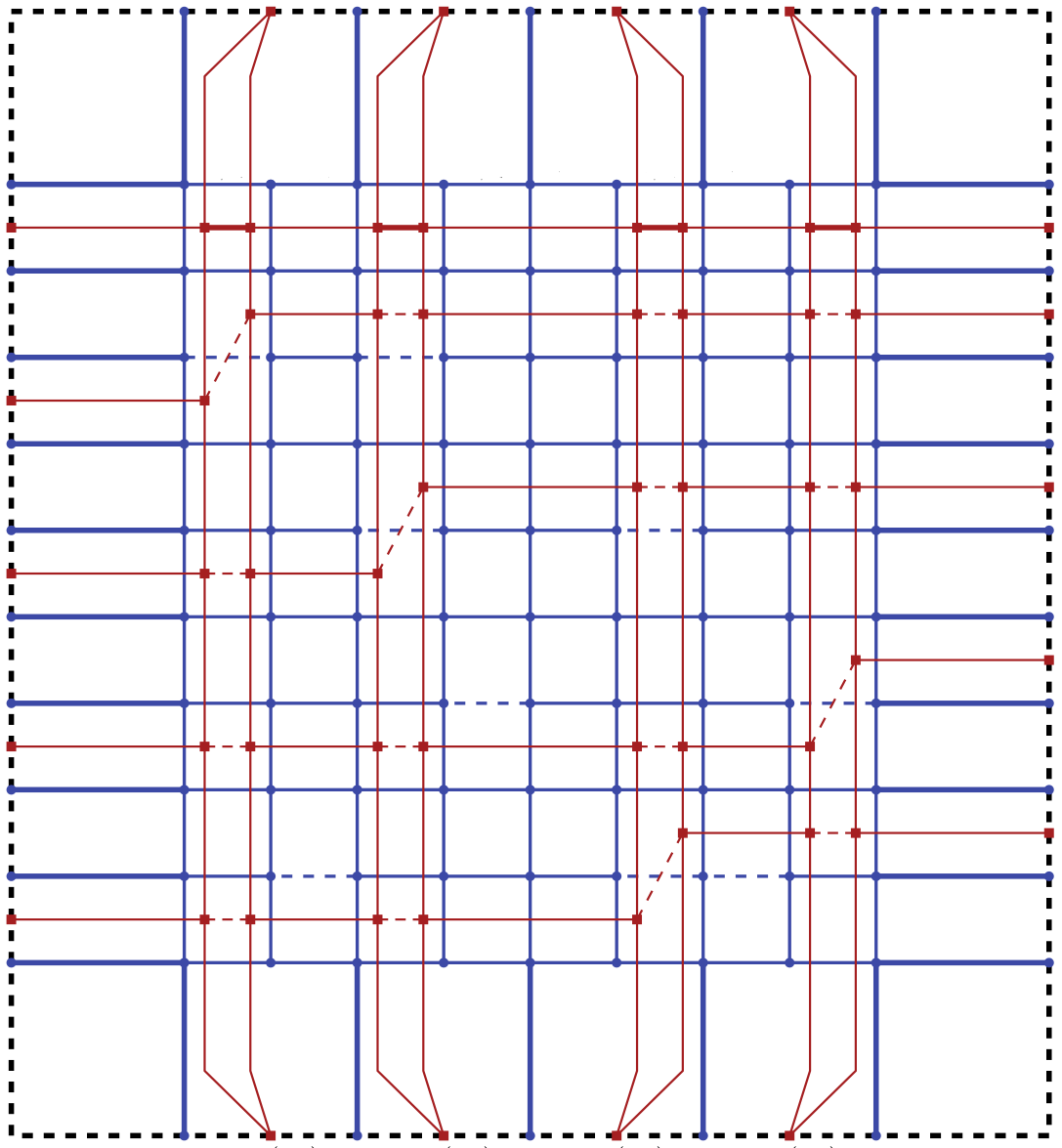}\qquad
\includegraphics[width=0.45\hsize]{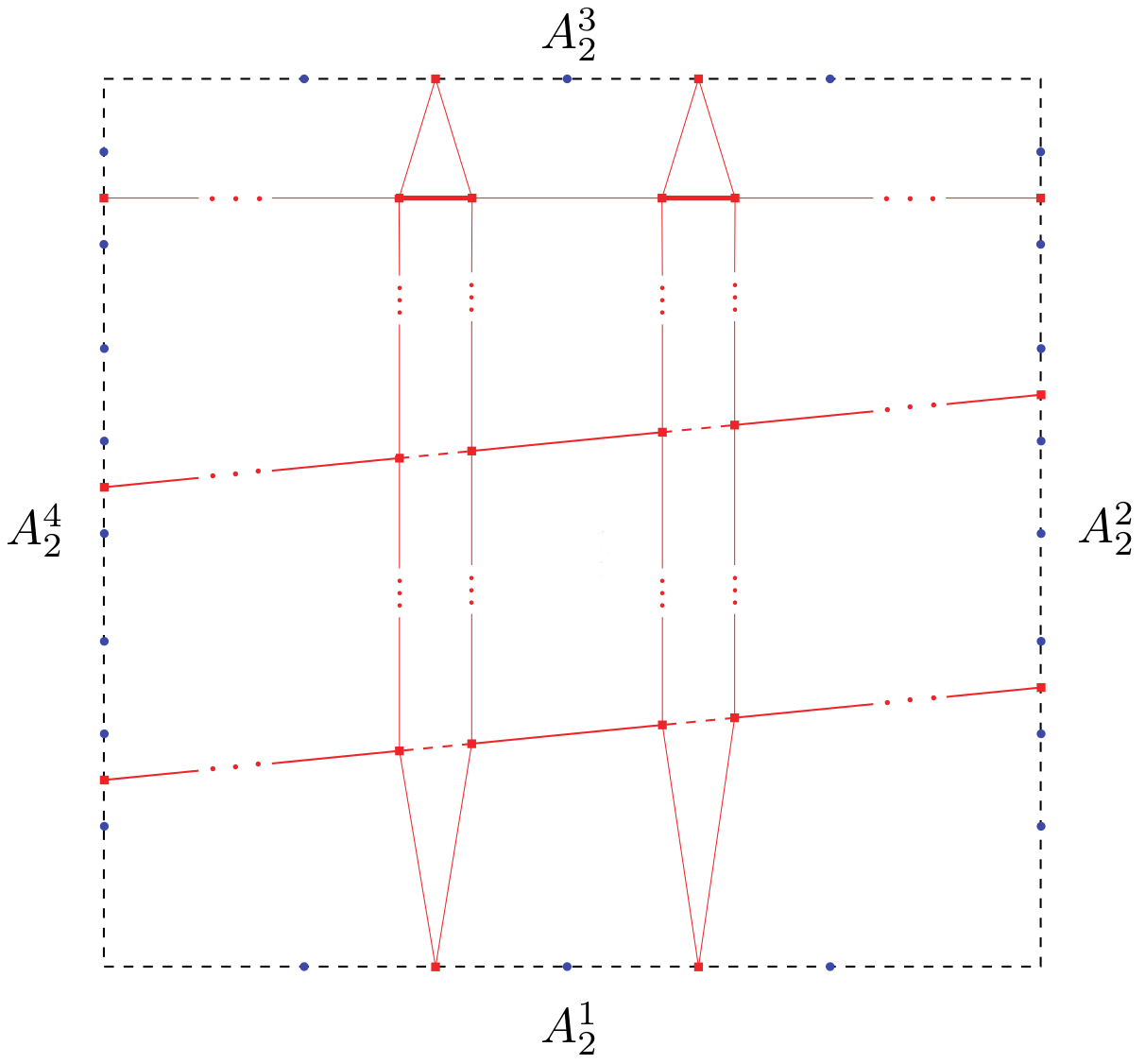}$$
\caption{An example of the construction of a hard anchored crossing number
	instance $(G_1,G_2)$ taken from \cite{DBLP:journals/siamcomp/CabelloM13}:
	$G_1$ is blue and $G_2$ is red (detailed alone on the right).
	The solid thin red edges all have weight $w$ (where $w$ is a large
	integer) and the middle dashed red edges have weight~$w-1$.}
\label{fig:CM}
\end{figure}

\medskip
We now establish the final key reduction required to prove Theorem~\ref{thm:maintheorem}.

\begin{theorem}
\label{thm:CMreduction}
There is a polynomial reduction from the special anchored crossing number problem
given in Theorem~\ref{thm:CMhard} to the 
{\sc OP-Homeo-Invariant $6$-FA Joint Planar Crossing Number} problem.
\end{theorem}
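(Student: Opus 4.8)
The plan is to reduce the special anchored crossing number problem of Theorem~\ref{thm:CMhard} to {\sc OP-Homeo-Invariant $6$-FA Joint Planar Crossing Number} by combining the ``face multiplication'' gadget $\ca F_{k,T}$ of Section~\ref{sec:multiplying} with the given anchored instance $(G_1,G_2)$, anchor sets $(A_1,A_2)$ and cyclic order $\sigma$. First I would take an anchored instance satisfying (A\ref{it:CMunique})--(A\ref{it:CMcut}); in particular $A_2=A_2^1\cup A_2^2\cup A_2^3\cup A_2^4$ with each $A_2^i$ consecutive in $\sigma$. The idea is that the four anchor cycles $C_1,C_2,C_3,C_4$ of $\ca F_{k,T}$ (see Figure~\ref{fig:F1F2}) will play the role of ``four docking places on the disk boundary'', and the ladder-like graph $F_2$ (together with its derived face anchors $B_1,\dots,B_k$) will be used to distribute the anchor vertices of $A_2$ along the boundary in the prescribed cyclic order. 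Concretely, I would set $k$ slightly larger than $|A_1\cup A_2|$, pick $T=\Omega(k^6)$ as in Lemma~\ref{lem:F1F2}, and then build $(H_1,H_2)$ by gluing $G_1$ into $F_1$ (attaching it so that its anchor vertices $A_1$ land on the appropriate non-anchor square faces $B_j,B'_j$, spaced according to $\sigma$) and gluing $G_2$ into $F_2$ so that the four boundary bundles of $G_2$ corresponding to $A_2^1,\dots,A_2^4$ are routed to the four anchor vertices $a_1,\dots,a_4$ of $F_2$, which in turn are required (by the face anchors $\{(C_i,a_i):i=1,2,3,4\}$) to lie in the four distinguished faces $C_1,\dots,C_4$ of $F_1$.

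The next step is the forward direction: given an optimal anchored drawing of $G_1+G_2$ with $x$ crossings respecting $\sigma$, I would convert it into a face-anchored joint planar embedding of $(H_1,H_2)$ with weighted crossing number equal to the optimum of Figure~\ref{fig:F1F2joint} plus $x\cdot(\text{something}) + (\text{lower-order interaction terms})$, by drawing $F_1$ in its unique planar embedding, threading $G_1$ through the grid spine so that crossings between $G_1$ and $G_2$ are exactly the crossings of the original anchored drawing, and using the ladder $F_2$ to carry each $A_2^i$-bundle to $a_i$ inside $C_i$. One must also certify the OP-homeo-invariant property: since planar embeddings of $F_1$ (hence of the grid part) and of each graph in the anchored instance are unique by (A\ref{it:CMunique}), and the gadget edges are chosen with the heavy weights (F\ref{it:thickT3})--(F\ref{it:thickT4}) that prevent them from being crossed, the optimal joint embedding is forced to realize a fixed orientation-preserving homeomorphism class, so the promise holds. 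For the reverse direction, I would start from an optimal face-anchored joint planar embedding of $(H_1,H_2)$, invoke Lemma~\ref{lem:F1F2} to conclude that (unless we pay a prohibitive $\Omega(k^3)\cdot T^2$ or $T$ penalty) the underlying drawing of $(F_1,F_2)$ must be essentially the one in Figure~\ref{fig:F1F2joint}, in particular the non-anchor vertices $b_j,b'_j$ sit in the natural one-to-one assignment to the square faces $B_j,B'_j$ and none of them lies in the ``bad'' face incident with both $x_2^1,x_3^1$. This pins down the cyclic order in which the $A_2$-bundles (and, via the $A_1$-attachment points along the grid, the $A_1$-anchors) emerge along the outer boundary of the $F_1$-region: it must agree with $\sigma$. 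Hence the residual drawing of $G_1+G_2$ is an anchored drawing respecting $\sigma$, with crossing number bounded by $(r-\text{offset})/(\text{weight factor})$, giving the desired two-way inequality and polynomial-time computability of $k$, $T$ and all weights (encoded in unary, so that Proposition~\ref{pro:weighted} applies later in the chain).

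The main obstacle I anticipate is the bookkeeping that links the four-way cut condition (A\ref{it:CMcut}) on $G_2$ to the four anchor cycles $C_1,\dots,C_4$: one has to argue that in an optimal joint embedding the bundle of $G_2$-edges corresponding to each $A_2^i$ really does go to the face $C_i$ and not split or get rerouted, and this is exactly where the minimum-weight-cut hypothesis is needed---a mis-routed bundle would have to cross a cut of $G_2$ that, by (A\ref{it:CMcut}) (and the ``diagonal cut'' remark after Theorem~\ref{thm:CMhard}), necessarily costs strictly more than the honest routing. A second delicate point is choosing the weight factor for the edges of $G_1$ and $G_2$ inside $H_1,H_2$: it must be large enough that a single $G_1$--$G_2$ crossing dominates all the $\Omega(k^5)T$ and $\Omega(k^2)T^2$ slack terms from $\beta(k,T)$ in Lemma~\ref{lem:F1F2}, yet still polynomial in the input size so the reduction stays polynomial-time; I would take this factor to be a fixed power of $T$ exceeding $T^4$, say $T^5$, and verify the arithmetic closes. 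Everything else---connectivity preservation, the final jump to simple $3$-connected graphs via Proposition~\ref{pro:face-anchored-3}, and chaining through Theorem~\ref{thm:face-anchored} to reach genus $6$---is then routine, and together with Theorem~\ref{thm:CMhard} yields Theorem~\ref{thm:maintheorem}.
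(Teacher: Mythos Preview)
Your plan has the right flavor but misses the key structural idea and gets two quantitative choices backwards.

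\textbf{Only four face anchors.} You build the instance directly on $\ca F_{k,T}$ with its four face anchors $\{(C_i,a_i):i=1,\dots,4\}$, but the target problem is \textsc{$6$-FA}. The paper obtains six anchors by \emph{doubling} the gadget: glue $\ca F_{k,T}$ to its horizontal mirror along the right side, identifying $C_3,C_4$ with their mirror copies, to form $\ca F^+=(F_1^+,F_2^+)$ with anchors $(C_1,a_1),(C_2,a_2),(C_3,a_3),(C_4,a_4),(\bar C_1,\bar a_1),(\bar C_2,\bar a_2)$. This doubling is not cosmetic. A single copy of $F_2$ has only two ladder rails $Q_1,Q_2$, so it cannot host four disjoint consecutive blocks $A_2^1,\dots,A_2^4$; the double has four rails $Q_1,Q_2,\overline{Q_2},\overline{Q_1}$, one per block. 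Likewise $F_1^+$ supplies four boundary paths $R^1,R^3,\overline{R^3},\overline{R^1}$ on which the $A_1$-anchors are placed.

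\textbf{Bundling $A_2^i$ to a single vertex loses $\sigma$.} You route all of $A_2^i$ to the single point $a_i$. But $\sigma$ may interleave $A_1$-anchors with the vertices of $A_2^i$ (condition~(A\ref{it:CMcut}) only says $A_2^i$ is consecutive in $\sigma$ \emph{restricted to $A_2$}). Collapsing $A_2^i$ to one point destroys that interleaving, so you cannot recover an anchored drawing respecting $\sigma$. In the paper each $A_2^i$ is spread along its own rail $Q_\bullet$ via an injection $\beta_i$, and each $A_1$-anchor is identified with a vertex of some $R^\bullet$; both $G_1$ and $G_2$ then live in the outer face of $F_1^+$, and Lemma~\ref{lem:F1F2} forces the $b_j,b'_j$ into their natural square faces, which pins the cyclic order.

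\textbf{Weight scale is inverted.} You propose weight $T^5$ on the $G_i$-edges so that one $G_1$--$G_2$ crossing dominates the slack. This is backwards: with weight $T^5$, the unavoidable crossings of the $G_2$-edges with the $T^2$-weight paths $R^\bullet$ already cost $\Omega(T^7)$ and swamp the gadget optimum $\Theta(T^3)$, wrecking the accounting. The paper keeps $G_1,G_2$ at their original (unit-scale) weights and instead chooses $T$ large enough that $T>s$; the optimum then reads as $\crgj{}(\ca F^+)+(w_1{+}w_2{+}w_3{+}w_4)\,T^2+s$, where the $T^2$ term is forced by the min-cut condition~(A\ref{it:CMcut}) (every $G_2$-path from $A_2^i$ to $A_2\setminus A_2^i$ must cross some $R^\bullet$), and $s$ is the recoverable least-significant term.

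In short: double the gadget to get six anchors and four rails, identify $A_1$ with vertices of $R^1\cup R^3\cup\overline{R^3}\cup\overline{R^1}$ and $A_2^i$ with vertices along the $i$-th rail, keep the $G_i$-weights small, and use~(A\ref{it:CMcut}) to certify the $(w_1{+}\cdots{+}w_4)T^2$ floor. Your use of Lemma~\ref{lem:F1F2} and of~(A\ref{it:CMunique}) for the homeo-invariant promise is correct once the construction is fixed.
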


\begin{proof}
We will use the instance $\ca F_{k,T}$ 
(where sufficiently large $k,T$ will be specified later)
of joint planar embedding of $(F_1,F_2)$ 
with the face anchors $\{(C_i,a_i):$ $i=1,2,3,4\}$,
from Section~\ref{sec:multiplying} in the following way.
The graph $F_1$ is joined with its mirror copy such that the anchor faces
$C_3,C_4$ get identified with $\bar{C}_3,\bar{C}_4$ 
of the copy in a ``horizontal mirror'' way, resulting in the graph $F_1^+$.
Similarly, $F_2^+$ results by joining $F_2$ with its mirror copy and
identifying $a_3,a_4$ with the copies $\bar a_3,\bar a_4$, respectively.
The resulting instance of joint planar embedding of $(F_1^+,F_2^+)$
with the six face anchors $\{(C_1,a_1),(C_2,a_2),(C_3=\bar C_3, a_3=\bar a_3),
 (C_4=\bar C_4, a_4=\bar a_4),(\bar C_1,\bar a_1),(\bar C_2,\bar a_2)\}$,
as depicted in Figure~\ref{fig:F1F2double},
will be shortly denoted by $\ca F^+$.

Let $\crgj{}(\ca F^+)$ shortly denote the (optimum) weighted crossing number
of this instance $\ca F^+$, which equals twice the value computed in
\eqref{eq:T3beta} by Lemma~\ref{lem:F1F2}.
\ifnoaccumulating{It follows, in particular, from Lemma~\ref{lem:F1F2} 
that any feasible solution of $\ca F^+$ other than the depicted one 
exceeds $\crgj{}(\ca F^+)$ by at least~$T$.}%
%

Consider an instance of the anchored crossing number problem,
i.e., a pair of weighted planar graphs 
$(G_1,G_2)$ with anchor sets $(A_1,A_2)$ and permutation $\sigma$
satisfying (A\ref{it:CMunique}) and (A\ref{it:CMcut}) for the partition
$A_2=A_2^1\cup A_2^2\cup A_2^3\cup A_2^4$ in a suitable cyclic order.
Our aim is to construct from it an instance $\ca H$ of face-anchored joint
planar crossing number, formed by a pair of graphs $(H_1,H_2)$, such that
$H_1\supseteq F_1^+$ and $H_2\supseteq F_2^+$ and $\ca H$ inherits the six
face anchors of~$\ca F^+$.
Furthermore, we will show with the help of (A\ref{it:CMunique}) that $\ca H$ is
orientation-preserving homeo-invariant.

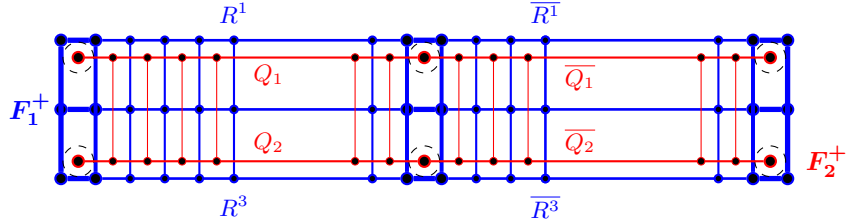
\begin{figure}[t]
\begin{center}
\begin{tikzpicture}[scale=0.23]
\tikzstyle{every node}=[draw, shape=circle, minimum size=2pt,inner sep=1.5pt, fill=black]
\tikzstyle{every path}=[thick, color=blue]
\draw (0,0) node (x1) {}; \draw (2,0) node (x2) {};
\draw (0,4) node[label=left:\mbox{\normalsize\boldmath$F_1^+\!\!$}] (y1) {};
\draw (2,4) node (y2) {};
\draw (0,8) node (z1) {}; \draw (2,8) node (z2) {};
\draw (20,0) node (x3) {}; \draw (22,0) node (x4) {};
\draw (20,4) node (y3) {}; \draw (22,4) node (y4) {};
\draw (20,8) node (z3) {}; \draw (22,8) node (z4) {};
\tikzstyle{every path}=[line width=2pt, color=blue]
\draw (x2) -- (x1) -- (z1) -- (z2);
\draw (x3) -- (x4); \draw (z4) -- (z3);
\draw (y1) -- (y2); \draw (y3) -- (y4);
\tikzstyle{every path}=[line width=1.5pt, color=blue]
\draw (x2) -- (z2); \draw (x3) -- (z3);
\draw (x4) -- (z4);
\tikzstyle{every path}=[line width=1.0pt, color=blue]
\draw (x2) -- (x3); \draw (y2) -- (y3); \draw (z2) -- (z3);
\tikzstyle{every node}=[draw, shape=circle, minimum size=1pt,inner sep=1pt, fill=black]
\tikzstyle{every path}=[thick, color=blue]
\draw (4,0) node {} -- (4,4) node {} -- (4,8) node {};
\draw (6,0) node {} -- (6,4) node {} -- (6,8) node {};
\draw (8,0) node {} -- (8,4) node {} -- (8,8) node {};
\draw (10,0) node[label=below:$R^3$] {} -- (10,4) node {}
	-- (10,8) node[label=above:$R^1$] {};
\draw (18,0) node {} -- (18,4) node {} -- (18,8) node {};
\tikzstyle{every path}=[thin, color=black]
\draw (1,7) node[dashed,fill=none, minimum size=3ex] {};
\draw (1,1) node[dashed,fill=none, minimum size=3ex] {};
\draw (21,7) node[dashed,fill=none, minimum size=3ex] {};
\draw (21,1) node[dashed,fill=none, minimum size=3ex] {};
\tikzstyle{every node}=[draw, shape=circle, minimum size=2pt,inner sep=1.5pt, fill=black]
\tikzstyle{every path}=[thick, color=red]
\draw (1,1) node (a1) {};
\draw (1,7) node (a2) {};
\draw (21,1) node (a3) {};
\draw (21,7) node (a4) {};
\draw (a1) -- (a3); \draw (a2) -- (a4);
\tikzstyle{every node}=[draw, shape=circle, minimum size=1pt,inner sep=1pt, fill=black]
\tikzstyle{every path}=[thin, color=red]
\draw (3,1) node {} -- (3,7) node {};
\draw (5,1) node {} -- (5,7) node {};
\draw (7,1) node {} -- (7,7) node {};
\draw (9,1) node {} -- (9,7) node {};
\draw (17,1) node {} -- (17,7) node {};
\draw (19,1) node {} -- (19,7) node {};
\draw (12,0.5) node[draw=none, fill=none, label=above:$Q_2$] {};
\draw (12,7.5) node[draw=none, fill=none, label=below:$Q_1$] {};
\tikzstyle{every node}=[draw, shape=circle, minimum size=2pt,inner sep=1.5pt, fill=black]
\tikzstyle{every path}=[thick, color=blue]
\draw (40,0) node (x5) {}; \draw (42,0) node (x6) {};
\draw (40,4) node (y5) {}; \draw (42,4) node (y6) {};
\draw (40,8) node (z5) {}; \draw (42,8) node (z6) {};
\tikzstyle{every path}=[line width=2pt, color=blue]
\draw (x5) -- (x6) -- (z6) -- (z5);
\draw (y5) -- (y6);
\tikzstyle{every path}=[line width=1.5pt, color=blue]
\draw (x5) -- (z5);
\tikzstyle{every path}=[line width=1.0pt, color=blue]
\draw (x4) -- (x5); \draw (y4) -- (y5); \draw (z4) -- (z5);
\tikzstyle{every node}=[draw, shape=circle, minimum size=1pt,inner sep=1pt, fill=black]
\tikzstyle{every path}=[thick, color=blue]
\draw (24,0) node {} -- (24,4) node {} -- (24,8) node {};
\draw (26,0) node {} -- (26,4) node {} -- (26,8) node {};
\draw (28,0) node {} -- (28,4) node {} -- (28,8) node {};
\draw (28,0) node[label=below:$\overline{R^3}$] {} -- (28,4) node {}
	-- (28,8) node[label=above:$\overline{R^1}$] {};
\draw (38,0) node {} -- (38,4) node {} -- (38,8) node {};
\tikzstyle{every path}=[thin, color=black]
\draw (41,7) node[dashed,fill=none, minimum size=3ex] {};
\draw (41,1) node[dashed,fill=none, minimum size=3ex] {};
\tikzstyle{every node}=[draw, shape=circle, minimum size=2pt,inner sep=1.5pt, fill=black]
\tikzstyle{every path}=[thick, color=red]
\draw (41,1) node[label=right:\mbox{\normalsize\boldmath$~~F_2^+\!\!$}] (a5) {};
\draw (41,7) node (a6) {};
\draw (a3) -- (a5); \draw (a4) -- (a6);
\tikzstyle{every node}=[draw, shape=circle, minimum size=1pt,inner sep=1pt, fill=black]
\tikzstyle{every path}=[thin, color=red]
\draw (23,1) node {} -- (23,7) node {};
\draw (25,1) node {} -- (25,7) node {};
\draw (27,1) node {} -- (27,7) node {};
\draw (37,1) node {} -- (37,7) node {};
\draw (39,1) node {} -- (39,7) node {};
\draw (30,0.5) node[draw=none, fill=none, label=above:$\overline{Q_2}$] {};
\draw (30,7.5) node[draw=none, fill=none, label=below:$\overline{Q_1}$] {};
\end{tikzpicture}
\end{center}
\caption{A ``join'' of the instance $\ca F_{k,T}$ from
	Figure~\ref{fig:F1F2joint} and of its horizontal mirror copy, giving a
	planar joint embedding instance $\ca F^+$ with $6$ face anchors.}
\label{fig:F1F2double}
\end{figure}

Recall the horizontal paths $Q_i$, $i=1,2$, in $F_2$ connecting $a_i$ to $a_{i+2}$,
and the horizontal paths $R^j$, $j=1,3$, in $F_1$ connecting $x^j_2$ to $x^j_3$.
Let $\overline{Q_i}$ and $\overline{R^j}$ denote their mirror copies in $\ca F^+$.
For sufficiently large $k$, we can easily construct injective mappings
$\alpha:A_1\to V(R^1\cup R^3\cup\overline{R^3}\cup\overline{R^1})$ and
$\beta_1:A_2^1\to V(Q_1)\setminus\{a_1,a_3\}$,
$\beta_2:A_2^2\to V(Q_2)\setminus\{a_2,a_4\}$,
$\beta_3:A_2^3\to V(\overline{Q_2})\setminus\{\bar a_1,a_3\}$,
$\beta_4:A_2^4\to V(\overline{Q_1})\setminus\{\bar a_2,a_4\}$,
such that the images of $A_1\cup A_2$ under the respective mappings, when
pictured in Figure~\ref{fig:F1F2double}, occur exactly in the cyclic order
specified by~$\sigma$.

Let $\beta=\beta_1\cup\beta_2\cup\beta_3\cup\beta_4$.
We define the graph $H_1$ from a disjoint union of $F_1^+$ and $G_1$,
by identifying the vertex $x$ with $\alpha(x)$ for each $x\in A_1$.
Similarly, we define $H_2$ as $F_2^+\cup G_2$ after identifying $y$ with
$\beta(y)$ for each $y\in A_2$.
The homeo-invariant property of $\ca H$ will easily follow from
Lemma~\ref{lem:F1F2} and property (A\ref{it:CMunique}) for the following pair of
embeddings $(H_1',H_2')$:
for $i=1,2$, $H_i'$ is the unique plane embedding of $H_i$ such that the
restriction of $H_i'$ to $F_i^+$ is as in Figure~\ref{fig:F1F2double}.

Let the weighted anchored crossing number of $(G_1,G_2)$ with
anchor sets $(A_1,A_2)$ and cyclic permutation~$\sigma$ equal~$s$.
We assume that $T=\Omega(k^6)$ is chosen sufficiently large such
that~$T>s$.
For $i=1,2,3,4$, let $w_i$ be the minimum weight of a cut in $G_2$
separating $A_2^i$ from $A_2\setminus A_2^i$;
by (A\ref{it:CMcut}), $w_i$ equals the sum of weights of the edges incident to~$A_2^i$.
Then, there is a drawing $H'$ of $H_1+H_2$ 
with $\crgj{}(\ca F^+)+(w_1+w_2+w_3+w_4)\cdot T^2+s$ weighted crossings,
where the term $(w_1+w_2+w_3+w_4)\cdot T^2$ accounts for crossings between
the $G_2$-edges incident with $A_2$ and the edges of 
$R^1\cup R^3\cup\overline{R^3}\cup\overline{R^1}$,
such that $H'$ is a joint orientation-preserving homeomorphic embedding
of $(H_1',H_2')$.

We finish the proof by showing that if the (weighted) face-anchored joint
crossing number of $\ca H$ equals~$r$,
then there exists an anchored drawing of $(G_1,G_2)$ respecting $(A_1,A_2)$ and~$\sigma$,
with at most $r':=r-(w_1+w_2+w_3+w_4)\cdot T^2-\crgj{}(\ca F^+)$ crossings.
This will then automatically imply that the aforementioned drawing $H'$
which is joint orientation-preserving homeomorphic to $(H_1',H_2')$,
is also an optimal solution of~$\ca H$.
\ifaccumulating{The details are again left for the full
preprint~\cite{arxivJoint}.}
\begin{accumulate}
\ifaccumulating{\subsection{Proof of Theorem~\ref{thm:CMreduction};
	technical details}
	\begin{proof}[continued]}
We may assume that $k$ is sufficiently large such
that~$(w_1+w_2+w_3+w_4)=o(k^3)$.
Take any drawing $H^0$ of $H_1+H_2$ which is an optimal joint embedding
solution to $\ca H$, i.e. such that $\crgj{}(H^0)=r$.
When $H^0$ is restricted to $F_1^++F_2^+$, the number of crossings is at least
$\crgj{}(\ca F^+)$ by definition.
$F_1^+$ has a unique plane embedding as in Figure~\ref{fig:F1F2double},
and if any of the $F_2^+$-paths $Q_1\cup\overline{Q_1}$,
$Q_2\cup\overline{Q_2}$ entered the $F_1^+$-face incident with
$R_1\cup\overline{R_1}\cup R_3\cup\overline{R_3}$,
then already $H^0$ restricted to $F_1^++F_2^+$ would have more than
$\crgj{}(\ca F^+)+\Omega(k^3)\cdot T^2$ crossings by Lemma~\ref{lem:F1F2}.
The latter contradicts optimality of $H^0$ since there exists
a feasible solution with $\crgj{}(\ca F^+)+(w_1+w_2+w_3+w_4)\cdot T^2+s\leq
 \crgj{}(\ca F^+)+o(k^3)\cdot T^2+T$ crossings,
as shown above.

Let $F^0_j$ denote the restriction of $H^0$ to $F_j^+$, 
and let $G^0_j$ be the restriction of $H^0$ to $G_j$, for $j=1,2$.
Note that $G^0_2$ is drawn in the outer face of $F^0_2$.
Consequently, any $G_2$-path from $A_2^1$ to $A_2\setminus A_2^1$
either has to cross the $F_1^+$-path $R^1$ (of weight $T^2$), 
or else it has to to make at least $T^3$ weighted crossings with one of the anchor faces
of $G^0_1$.
However, the latter cannot happen since $\crgj{}(\ca F^+)+T^3$ crossings
is more than the optimum $r\leq\crgj{}(\ca F^+)+o(k^3)\cdot T^2$.
Analogous claims hold for $A_2^2,A_2^3,A_2^4$ and
$R_3,\overline{R_3},\overline{R_1}$, respectively.

By the minimum-cut property of the instance $(G_1,G_2)$, as formulated in
Theorem~\ref{thm:CMhard}, we hence account for at least
$\crgj{}(\ca F^+)+(w_1+w_2+w_3+w_4)\cdot T^2= r-r'$ 
weighted crossings which involve edges of $F_1^+$.
The number of crossings in the restriction of $H^0$ to $G_1+G_2$ thus is at
most~$r'$, as desired.
However, we still have to prove that this restriction is an anchored drawing
of $G_1+G_2$.

Since $r'<T$ by our assumption, there can be no more crossings of $G^0_2$
with $F^0_1$ than those with $R_1\cup\overline{R_1}\cup
 R_3\cup\overline{R_3}$ accounted for above.
Hence we can draw in $H^0$ a simple curve $\gamma_1$ starting in $x_2^1$ and passing
through $b_1,c_1^1,b_2,c_2^1,b_3,\dots,c_{k-1}^1,b_k$ to $x_3^1$,
such that $\gamma_1$ is disjoint from $H^0$ except at the listed vertices.
Analogous curves $\gamma_2,\bar\gamma_2,\bar\gamma_1$ can be drawn
alongside $Q_2,\overline{Q_2},\overline{Q_1}$, and these together
with some edges of $F_1^0$ form a simple closed curve
which plays the role of the disk boundary in an anchored drawing 
of $G_1+G_2$ restricted from~$H^0$.
\ifaccumulating{\qed\end{proof}}
\end{accumulate}
\qed\end{proof}

\ifnoaccumulating{\medskip}
\begin{proof}[of Theorem~\ref{thm:maintheorem}]
Theorem~\ref{thm:maintheorem} for the {\sc Joint Crossing Number} problem
and genus $6$ follows imediately by the chain of
reductions from Theorem~\ref{thm:face-anchored},
and Theorems~\ref{thm:CMhard} and~\ref{thm:CMreduction}. 
\ifnoaccumulating{In the case of simple $3$-connected graphs on the input 
we additionally employ Proposition~\ref{pro:face-anchored-3}. }%
For genus greater than $6$, it suffices to add dummy face anchors 
in the reduction of Theorem~\ref{thm:face-anchored}.
\ifnoaccumulating{\par}%
Finally, for hardness of the {\sc Homeomorphic} and {\sc OP-Homeomorphic} variants,
we can simply use the same reductions---by the orientation-preserving
homeo-invariant promise, the (hard) instances produced by the chain of
reductions have the same solution value in all the three problem variants.
\qed\end{proof}

\section{Conclusions}\label{sec:conclusions}

The following is another immediate consequence of
Theorems~\ref{thm:face-anchored} and~\ref{thm:CMhard}:


\begin{theorem}\label{thm:facetheorem}
The {\sc $h$-FA Joint Planar Crossing Number} problem
is NP-hard for every~$h\geq6$.
\end{theorem}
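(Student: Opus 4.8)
The plan is to reuse the chain of reductions built for Theorem~\ref{thm:maintheorem}, but to stop at the planar face-anchored level instead of passing to positive genus via Theorem~\ref{thm:face-anchored}. Thus I would start from the NP-hard source problem of Theorem~\ref{thm:CMhard} (the special weighted anchored crossing number of a pair of planar graphs) and apply Theorem~\ref{thm:CMreduction} to obtain, in polynomial time, a corresponding instance of (weighted) {\sc OP-Homeo-Invariant $6$-FA Joint Planar Crossing Number}. It then remains to (i) remove the weights, (ii) discard the homeo-invariance promise, and (iii) climb from $h=6$ to an arbitrary $h\ge 6$.

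For step (i) I would invoke Proposition~\ref{pro:weighted}: replacing each weighted edge by a bunch of parallel unweighted edges routed closely together moves no vertex and destroys no facial cycle of $G_1$, so the construction applies verbatim to the face-anchored variant and preserves its six face anchors; since the weights involved are of size $T=\Omega(k^6)$, hence polynomial, they may be written in unary. For step (ii), recall that {\sc OP-Homeo-Invariant $6$-FA Joint Planar Crossing Number} is by definition nothing but {\sc $6$-FA Joint Planar Crossing Number} with its inputs \emph{restricted} to homeo-invariant instances, the accompanying embeddings being only a promise and not part of the decision question; hence any polynomial-time algorithm for {\sc $6$-FA Joint Planar Crossing Number} would in particular decide the restricted version, and NP-hardness transfers upward. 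For step (iii), given an instance of {\sc $6$-FA Joint Planar Crossing Number} I would pad $G_1$ with $h-6$ pairwise-disjoint $4$-cycle components and $G_2$ with the same number of fresh isolated vertices, declaring each new (cycle, vertex) pair an extra face anchor; placing each new anchor vertex inside its $4$-cycle in an otherwise unused region of the plane costs no crossing and is compatible with every joint planar embedding of the original pair, so the face-anchored joint planar crossing number is unchanged and the padded instance is still homeo-invariant.

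The genuine work has already been carried out in Lemma~\ref{lem:F1F2} and in the proof of Theorem~\ref{thm:CMreduction} --- controlling how the non-anchor vertices of $F_2^+$ are distributed among the grid faces of $F_1^+$, and forcing the minimum-cut structure of $G_2$ to be realised by crossings with the paths $R^1,R^3,\overline{R^3},\overline{R^1}$ --- so I expect no further obstacle. The one point calling for a little care is step (ii): one must verify that the target of Theorem~\ref{thm:CMreduction} is exactly the promised version of the \emph{plain} face-anchored problem rather than a syntactically different ``promise problem'', which is precisely how it is set up in Section~\ref{sec:face-anchored}, so that hardness of the restricted version does imply hardness of {\sc $h$-FA Joint Planar Crossing Number}.
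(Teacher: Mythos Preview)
Your argument is correct and is essentially the paper's intended one-line proof, spelled out in full: chain Theorem~\ref{thm:CMhard} with Theorem~\ref{thm:CMreduction}, drop the homeo-invariance promise (which, as you observe, is merely a restriction on inputs), unwind the polynomial weights via the construction of Proposition~\ref{pro:weighted}, and pad with dummy face anchors for $h>6$. Note that the paper's text attributes the result to ``Theorems~\ref{thm:face-anchored} and~\ref{thm:CMhard}'', but the reference to Theorem~\ref{thm:face-anchored} appears to be a slip---that theorem reduces \emph{from} the face-anchored problem to genus~$h$, not to it---and Theorem~\ref{thm:CMreduction} is clearly what is meant; your write-up matches that intended chain.
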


There is yet another interesting consequence.
The main result of aforementioned~\cite{DBLP:journals/siamcomp/CabelloM13}
is that {\sc Crossing Number} is NP-hard even on {\em almost-planar
graphs}, i.e.\ those which can be made planar by removing one edge.
Their hardness reduction, derived from hard anchored crossing number
instances as shown in Figure~\ref{fig:CM},
essentially uses an unbounded number of vertices of arbitrarily high degrees.
Elaborating on the reduction of our proof of Theorem~\ref{thm:CMreduction}, while
using a special gadget derived from $\ca F^+$ turned inside out, 
we can give back the following strengthening%
\ifaccumulating{ (we omit the proof due to space constraints)}:

\begin{theorem}[slight improvement
    upon~\cite{DBLP:journals/siamcomp/CabelloM13}]
\label{thm:stren}%
The {\sc Crossing Number} problem remains NP-hard
even if the input is restricted to almost-planar graphs having 
a bounded number, namely at most~$16$, vertices of degree greater than~$3$.
\end{theorem}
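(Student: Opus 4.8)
\emph{Overall approach.} The plan is to re-enact the Cabello--Mohar proof~\cite{DBLP:journals/siamcomp/CabelloM13} that {\sc Crossing Number} is NP-hard on almost-planar graphs, but to replace the high-degree ``linking'' machinery used there by a bounded-degree frame built from the gadget $\ca F^+$ of Section~\ref{sec:mult}. As in the proof of Theorem~\ref{thm:CMreduction}, I would start from a hard instance $(G_1,G_2,A_1,A_2,\sigma)$ of the anchored crossing number of a pair of planar graphs given by Theorem~\ref{thm:CMhard}, satisfying (A\ref{it:CMunique}) and the four-part cut property (A\ref{it:CMcut}). First I would preprocess it so that $G_1$ and $G_2$ are subcubic: a planar graph with a prescribed rotation can be made subcubic by the usual vertex-splitting along that rotation, and every anchor of degree $d$ is expanded into a path of $d$ subcubic ``ports'' drawn along the disc boundary. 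This changes neither the anchored crossing number nor (A\ref{it:CMunique})--(A\ref{it:CMcut}); it only refines $\sigma$ in a controlled way.

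\emph{The construction.} The target instance is a single graph $G^\ast=G_1\cup G_2\cup\Phi+e^\ast$, where $\Phi$ is the planar frame obtained from $\ca F^+$ (Figure~\ref{fig:F1F2double}) ``turned inside out'': the two mutually rigid grid families $F_1^+$ and $F_2^+$ are re-embedded so that one of them bounds a designated inner region and the other an outer region, with the six anchor slots of $\ca F^+$ appearing consecutively along the common wall. I would then attach $G_1$ to the $R^j$-type horizontal paths and $G_2$ to the $Q_i$-type paths exactly as in the proof of Theorem~\ref{thm:CMreduction}, choosing the (now port-valued) injections so that $A_1\cup A_2$ appears along the wall in the cyclic order prescribed by $\sigma$, with the four blocks of $A_2$ sent to $Q_1,Q_2,\overline{Q_2},\overline{Q_1}$ in accordance with (A\ref{it:CMcut}). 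Finally $e^\ast$ is a single edge joining the inner region to the outer one. Since $\Phi$ is planar and $G_1,G_2$ hang off opposite sides of its wall, $G^\ast-e^\ast$ is planar, so $G^\ast$ is almost-planar; the point (as in Cabello--Mohar) is that $G^\ast-e^\ast$ still admits many combinatorially different planar embeddings, and the rigidity of $\Phi$ will force these to be in bijection with the anchored drawings of $(G_1,G_2)$, with the cost of inserting $e^\ast$ into such an embedding equal to a fixed amount plus the number of $G_1$--$G_2$ crossings it entails.

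\emph{Correctness and degree control.} With the parameters $k$, $T=\Omega(k^6)$ of $\ca F^+$ chosen so that $T$ exceeds the target anchored crossing number $s$, the bookkeeping is the same as in Theorem~\ref{thm:CMreduction}: by \emph{both} conclusions of Lemma~\ref{lem:F1F2}, any planar embedding of $G^\ast-e^\ast$ other than the intended one (in particular, any one that misplaces part of $F_2^+$ inside the wrong face of $F_1^+$) costs at least $T$ more, and since $T>s$ this does not occur in an optimal drawing; hence the optimum of $G^\ast$ equals a fixed base value (determined by $F_1^+,F_2^+$ and the minimum cut weights $w_1,w_2,w_3,w_4$ of (A\ref{it:CMcut})) plus exactly the weighted anchored crossing number of $(G_1,G_2)$. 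It then remains to turn this weighted instance into an unweighted almost-planar one without introducing new large degrees; here Proposition~\ref{pro:weighted} cannot be used verbatim, since blowing an edge into a parallel bundle both breaks almost-planarity (one would have to delete a whole bundle) and creates high degrees. Instead each thick edge is realised by a small subcubic planar ``wall'' gadget forcing at least its weight many crossings to pass through, the single apex edge $e^\ast$ is left untouched, and the only vertices that keep degree greater than $3$ are the sixteen designated anchor/apex vertices of $\Phi$ (the analogues of $a_1,a_2,a_3,a_4,\bar a_1,\bar a_2$, together with the few further vertices created when gluing the two grid families and attaching $e^\ast$). The result is an almost-planar graph with at most $16$ vertices of degree exceeding $3$, computable in polynomial time, which proves the theorem.

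\emph{The main obstacle.} The step I expect to be hardest is the design of $\Phi$, i.e. getting ``$\ca F^+$ turned inside out'' to do its job with only one deletable edge. On the surface $\ca S_6$ the construction of Theorem~\ref{thm:CMreduction} gets six handles essentially for free, whereas in the plane a single apex edge supplies only ``one handle's worth'' of topological freedom; one must verify that the multiplied-face-anchor rigidity of Lemma~\ref{lem:F1F2} fully compensates for this, so that no planar re-embedding of $G^\ast-e^\ast$ allows a cheaper insertion of $e^\ast$ than the one corresponding to an optimal anchored drawing of $(G_1,G_2)$. The accompanying degree bookkeeping---subcubic wall gadgets replacing parallel bundles, and subcubic port-expansions at the attachment sites---is routine but has to be done carefully to keep the count of large-degree vertices at exactly $16$ and to preserve almost-planarity at every step.
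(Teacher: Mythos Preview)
The paper does not actually prove this theorem; it offers only the one-line hint that the result follows by ``elaborating on the reduction of our proof of Theorem~\ref{thm:CMreduction}, while using a special gadget derived from $\ca F^+$ turned inside out.'' Your proposal aligns with that hint and develops it considerably further than the paper itself does, so there is essentially nothing in the paper to compare against.

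That said, the obstacle you flag at the end is genuine and is not resolved by the sketch. Two specific issues sharpen it. First, Lemma~\ref{lem:F1F2} analyses a \emph{joint} embedding of two disjoint graphs $F_1,F_2$; once you weld $F_1^+$ and $F_2^+$ into a single connected planar frame $\Phi$ (as your ``common wall'' phrasing requires---otherwise $G^\ast-e^\ast$ has two components and $G^\ast$ is itself planar), that lemma no longer applies verbatim and its rigidity conclusion must be re-derived for the ordinary crossing number of a single graph. Second, your assertion that the planar embeddings of $G^\ast-e^\ast$ are ``in bijection with the anchored drawings of $(G_1,G_2)$'' conflates embeddings (no crossings) with drawings; what the reduction actually needs is that in every near-optimal drawing of $G^\ast$ the presence of $e^\ast$ forces $G_1$ and $G_2$ into a common region of the frame, where they incur precisely the anchored crossings, and that the alternative of routing $e^\ast$ across $\Phi$ while keeping $G_1,G_2$ separated is strictly more expensive. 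Specifying the endpoints of $e^\ast$ and carrying out this cost comparison is exactly the missing content of ``turned inside out.'' The bound of~$16$ high-degree vertices likewise cannot be verified without the explicit frame.
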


Note that, on the other hand, Cabello and Mohar
\cite{DBLP:journals/algorithmica/CabelloM11} prove that
{\sc Crossing Number} is solvable in linear
time if the input is an almost-planar graph with all vertices except
for the
two of the planarizing edge having degree at most~$3$.

Another natural extension of our results would be to prove
Theorem~\ref{thm:maintheorem} for non-orientable surfaces.
This is not inherently difficult---it suffices to replace the toroidal gadgets $T_i$
(cf.~Figure~\ref{fig:Tfaceanchors}) with suitable projective grids,
and to use a crosscap instead of each toroidal handle.
However, a formal statement would require us to repeat most of the arguments
of the proof of Theorem~\ref{thm:face-anchored},
and hence we refrain from giving the full statement in this
short paper.

A question worth further investigation is how small the genus in
Theorem~\ref{thm:maintheorem} and the number of face anchors in
Theorem~\ref{thm:facetheorem} can be for the statements to hold.
\ifnoaccumulating{%
Recent improvements in our reductions, related to Theorem~\ref{thm:stren},
suggest that perhaps $6$ can be replaced by $4$ in these theorems.
}

\begin{small}
\bibliographystyle{plain}
\bibliography{phcross}
\end{small}

\end{document}

\ifaccumulating{%
\newpage\appendix\sloppy
\section{Appendix}
\def\thesubsection{A-\arabic{subsection}}
\accuprint
}

\end{document}